
\documentclass[sigplan,10pt]{acmart}\settopmatter{printfolios=true,printccs=false,printacmref=false}



  \acmConference{Arxiv version}{November 2019}{}
  \acmYear{2019}
  \acmISBN{} 
  \acmDOI{} 
  \startPage{}

\setcopyright{none}

\bibliographystyle{ACM-Reference-Format}


\usepackage{booktabs}   
\usepackage{subcaption} 

\usepackage{amsmath}
\usepackage{amssymb}
\usepackage{amsthm}
\usepackage{graphicx}
\usepackage[colorinlistoftodos]{todonotes}
\usepackage{algorithm}
\usepackage[noend]{algpseudocode}
\usepackage{enumerate}
\usepackage{xcolor}
\usepackage{framed}
\usepackage{bm}
\usepackage{pifont}
\usepackage{multicol}
\usepackage{lipsum}
\usepackage{tikz}
\usetikzlibrary{calc}
\usepackage{array}
\usepackage{relsize}
\usepackage{comment}
\usepackage{url}
\usepackage[title]{appendix}
\usepackage{caption}
\usepackage{subcaption}
\usepackage{xstring}
\usepackage{times}
\usepackage{wrapfig}
\usepackage{varwidth}
\usepackage{color}
\usepackage[leftcaption]{sidecap}

\usepackage[english]{babel}
\usepackage[utf8x]{inputenc}
\usepackage[T1]{fontenc}





\usepackage{xspace}      

\newtheorem{observation}{Observation}
\newtheorem{claim}{Claim}
\newtheorem{lemma}{Lemma}
\newtheorem{corollary}{Corollary}

\newtheorem{definition}{Definition}


\algdef{SE}[RECEIVING]{Receiving}{EndReceiving}[1]{\textbf{upon
receiving}\ #1\ \algorithmicdo}{\algorithmicend\ \textbf{}}%
\algtext*{EndReceiving}

\algdef{SE}[UPON]{Upon}{EndUpon}[1]{\textbf{upon
}\ #1\ \algorithmicdo}{\algorithmicend\ \textbf{}}%
\algtext*{EndUpon}



\newcommand{\lr}[1]{\langle #1 \rangle}

\newcommand{\DM}[1]{\textbf{DM:} \textcolor{purple}{#1}}
\newcommand{\arik}[1]{\textbf{arik:} \textcolor{green}{#1}}

\newcommand{\ignore}[1]{}

\newcommand{\propose}{$\mathsf{engage}$\xspace}
\newcommand{\vc}{$\mathsf{wedge\&exchange}$\xspace}
\newcommand{\barriersync}{$\mathsf{barrier\text{-}sync}$\xspace}
\newcommand{\barrierready}{$\mathsf{barrier\text{-}ready}$\xspace}
\newcommand{\elect}{$\mathsf{elect}$\xspace}


\def\HiLi{\leavevmode\rlap{\hbox to
\hsize{\color{gray!10}\leaders\hrule height .8\baselineskip
depth .5ex\hfill}}}

\algblockdefx[ON]{On}{EndOn}[1]
  {{\bf On} #1\   \algorithmicdo }
  {\algorithmicend\ }
\makeatletter
\ifthenelse{\equal{\ALG@noend}{t}}%
  {\algtext*{EndOn}}
  {}%
\makeatother

\algdef{SE}[RECEIVING]{Receiving}{EndReceiving}[1]
	{\textbf{upon receiving}\ #1\ \algorithmicdo}

\makeatletter

\tikzset{%
  remember picture with id/.style={%
    remember picture,
    overlay,
    save picture id=#1,
  },
  save picture id/.code={%
    \edef\pgf@temp{#1}%
    \immediate\write\pgfutil@auxout{%
      \noexpand\savepointas{\pgf@temp}{\pgfpictureid}}%
  },
  if picture id/.code args={#1#2#3}{%
    \@ifundefined{save@pt@#1}{%
      \pgfkeysalso{#3}%
    }{
      \pgfkeysalso{#2}%
    }
  }
}

\def\savepointas#1#2{%
  \expandafter\gdef\csname save@pt@#1\endcsname{#2}%
}

\def\tmk@labeldef#1,#2\@nil{%
  \def\tmk@label{#1}%
  \def\tmk@def{#2}%
}

\tikzdeclarecoordinatesystem{pic}{%
  \pgfutil@in@,{#1}%
  \ifpgfutil@in@%
    \tmk@labeldef#1\@nil
  \else
    \tmk@labeldef#1,(0pt,0pt)\@nil
  \fi
  \@ifundefined{save@pt@\tmk@label}{%
    \tikz@scan@one@point\pgfutil@firstofone\tmk@def
  }{%
  \pgfsys@getposition{\csname save@pt@\tmk@label\endcsname}\save@orig@pic%
  \pgfsys@getposition{\pgfpictureid}\save@this@pic%
  \pgf@process{\pgfpointorigin\save@this@pic}%
  \pgf@xa=\pgf@x
  \pgf@ya=\pgf@y
  \pgf@process{\pgfpointorigin\save@orig@pic}%
  \advance\pgf@x by -\pgf@xa
  \advance\pgf@y by -\pgf@ya
  }%
}

\newlength\AlgIndent
\setlength\AlgIndent{0pt}
\newcounter{mymark}
\newcommand\ColorLine{%
  \stepcounter{mymark}%
  \tikz[remember picture with id=mark-\themymark,overlay] {;}%
  \begin{tikzpicture}[remember picture,overlay]%
    \filldraw[gray!20]%
   let \p1=(pic cs:mark-\themymark), 
   \p2=(current page.east)  in 
   ([xshift=-\ALG@thistlm-0em,yshift=-0.7ex]0,\y1)  rectangle
   ++(\linewidth+\AlgIndent,\baselineskip); \end{tikzpicture}%
}%

\newcommand\ColorLinex{%
  \stepcounter{mymark}%
  \tikz[remember picture with id=mark-\themymark,overlay] {;}%
  \begin{tikzpicture}[remember picture,overlay]%
    \filldraw[gray!20]%
   let \p1=(pic cs:mark-\themymark), 
   \p2=(current page.east)  in 
   ([xshift=-\ALG@thistlm--3em,yshift=-0.7ex]0,\y1) 
   rectangle ++(\linewidth+\AlgIndent,\baselineskip); \end{tikzpicture}%
}%


\algnewcommand\CREQUIRE{\item[\setlength\AlgIndent{1.6em}\ColorLine\algorithmicrequire]}%
\algnewcommand\CENSURE{\item[\setlength\AlgIndent{1.6em}\ColorLine\algorithmicensure]}%
\algnewcommand\CSTATE{\State\ColorLine}%
\algnewcommand\CSTATEx{\Statex\ColorLinex}%
\algnewcommand\CCOMMENT{\Comment\ColorLine}%

%
%
   
   \algdef{SE}[IF]{CIF}{ENDIF}%
   [2][default]{\ColorLine\algorithmicif\ #2\ \algorithmicthen}%
   {\algorithmicend\ \algorithmicif}%
   
   \algdef{SE}[FOR]{CFOR}{ENDFOR}%
   [2][default]{\ColorLine\algorithmicfor\ #2\ \algorithmicdo}%
   {\algorithmicend\ \algorithmicfor}%
\makeatother

\begin{document}

\title[ACE: Abstract Consensus Encapsulation]{ACE: Abstract Consensus
Encapsulation\\ for Liveness Boosting of State Machine Replication}   


\author{Alexander Spiegelman}
\affiliation{
  \institution{VMware Research}            
}

\author{Arik Rinberg}
\affiliation{
  \institution{Technion}           
}

\begin{abstract}

  With the emergence of cross-organization attack-prone byzantine
  fault-tolerant (BFT) systems, so-called Blockchains, providing
  asynchronous state machine replication (SMR) solutions is no longer a theoretical concern.
  This paper introduces \emph{ACE}: a general framework for the
  software design of fault-tolerant SMR systems.
  We first propose a new \emph{leader-based-view (LBV)} abstraction
  that encapsulates the core properties provided by each view in a
  partially synchronous consensus algorithm, designed according to the
  leader-based view-by-view paradigm (e.g., PBFT and Paxos).
  Then, we compose several LBV instances in a non-trivial way
  in order to boost asynchronous liveness of existing SMR solutions.
    
  ACE is model agnostic -- it abstracts away any model assumptions
  that consensus protocols may have, e.g., the ratio
  and types of faulty parties.
  For example, when the LBV abstraction is instantiated with a
  partially synchronous consensus algorithm designed to tolerate
  crash failures, e.g., Paxos or Raft, ACE yields an asynchronous SMR
  for $n = 2f+1$ parties.
  However, if the LBV abstraction is instantiated with a byzantine
  protocol like PBFT or HotStuff, then ACE yields an asynchronous byzantine SMR for
  $n = 3f+1$ parties.
  
  To demonstrate the power of ACE, we implement it
  in C++, instantiate the LBV abstraction with a view implementation
  of HotStuff -- a state of the art partially synchronous byzantine
  agreement protocol -- and compare it with the base HotStuff
  implementation under different adversarial scenarios.
  Our evaluation shows that while ACE is outperformed by
  HotStuff in the optimistic, synchronous, failure-free case, ACE
  has absolute superiority during network asynchrony and
  attacks.

  \end{abstract}

\begin{CCSXML}
<ccs2012>
<concept>
<concept_id>10010520.10010575</concept_id>
<concept_desc>Computer systems organization~Dependable and fault-tolerant systems and networks</concept_desc>
<concept_significance>300</concept_significance>
</concept>
<concept>
<concept_id>10011007.10010940</concept_id>
<concept_desc>Software and its engineering~Software organization and properties</concept_desc>
<concept_significance>300</concept_significance>
</concept>
<concept>
<concept_id>10011007.10010940.10010971.10011682</concept_id>
<concept_desc>Software and its engineering~Abstraction, modeling and modularity</concept_desc>
<concept_significance>300</concept_significance>
</concept>
</ccs2012>
\end{CCSXML}
  
\ccsdesc[300]{Computer systems organization~Dependable and fault-tolerant systems and networks}
\ccsdesc[300]{Software and its engineering~Software organization and properties}
\ccsdesc[300]{Software and its engineering~Abstraction, modeling and modularity}

\keywords{asynchronous SMR, abstractions, composition}  

\maketitle

\section{Introduction}
\label{sec:intro}

In practice, building reliable systems via state machine replication (SMR)
requires resilience against all network conditions, including malicious attacks.
The best way to model such settings is by assuming asynchronous
communication links. However, due to the FLP
result~\cite{fischer1982impossibility}, deterministic asynchronous SMR
solutions are impossible.

Two principal approaches are used to circumvent
this result.
The first is by assuming \emph{partial
synchrony}~\cite{dwork1988consensus}, in which protocols are designed
to guarantee safety under worst case network conditions, but are able
to satisfy progress only during ``long enough'' periods of network
synchrony.
The vast majority of the protocols in this model follow the
leader-based view-by-view paradigm due to their speed during
synchronous attack-free network periods, and their relative
simplicity.
In fact, most deployed systems, several of which have become the de
facto standards for building reliable systems (e.g.,
Paxos~\cite{lamport2001paxos}, PBFT~\cite{pbft} and others~\cite{
zookeeper, ongaro2014search, etcd, hotstuff, bessani2014state}),
adopt this approach.
The drawback of the partial synchrony model is fact that it fails to
capture mobile networks attacks~\cite{santoro1989time}, leaving the
leader-based view-by-view algorithms vulnerable.
For example, an attacker can prevent progress by adaptively blocking
the communication of one party (the leader of the current view) at a
time.


The second approach to circumvents the FLP impossibility is by
employing randomization~\cite{ben1983another, rabin1983randomized}.
The most commonly used strategy is to always satisfy safety
properties, but relaxing liveness to guarantee eventual
progress with probability approaching $1$ under all network
conditions.
Potentially, protocols designed for the asynchronous communication
model can operate at network speed, but unfortunately, they are
rarely deployed in practice due to their complexity and overhead, and
are mostly the focus of theoretical academic work.



\paragraph{Main contribution.} 
In this paper we combine the best of both approaches.
We present \emph{ACE}, a simple
generic framework for \emph{asynchronous boosting}, which converts
consensus (also called agreement) algorithms designed according the
\emph{leader-based view-by-view paradigm} in the partial synchrony model into randomized fully asynchronous SMR solutions.
ACE is model agnostic -- it has no model assumptions, and thus can be
applied to any leader-based protocol in the byzantine or crash
failure model.
As a result, with ACE, a system designer can benefit twofold. 
On the one hand, from the experience gained in decades of
leader-based view-by-view algorithm design and engineered systems,
and on the other hand, from a robust asynchronous solution.





\paragraph{View-by-view paradigm.}

ACE is general and applicable
to a family of consensus leader-based view-by-view protocols
designed for the partially synchronous communication
model~\cite{lamport2001paxos, pbft, hotstuff}.
Such protocols divide executions into a sequence of views, each with
a designated leader.
Every view is then further divided into two phases. 
First, the \emph{leader-based} phase in which the designated leader
tries to drive progress by getting all parties to commit its
value.
Then, when parties suspect that the leader is faulty, whether it is
really faulty or due to asynchrony or network attacks, they start the
\emph{view-change} phase in which they \emph{exchange} information to
safely \emph{wedge} the current view 
and move to the next one.

%

\paragraph{Technical contribution.}
ACE's first contribution is providing a formal
characterization of the leader-based view-by-view protocols by
defining a \emph{leader-based view (LBV)} abstraction, which
encapsulates the core properties of a single view and provides an API
that allows de-coupling of the \emph{leader-based} phase from the
\emph{view-change} phase.
%
%
%
In the view-by-view paradigm, view-change phases are triggered
with timers: parties start a timer at the beginning of the
leader-based phase in each view and if the timer expires before the
leader drives progress, parties move to the view-change phase.
Indeed, if we instantiate the LBV abstraction with an implementation
of a view of some view-by-view protocol and operate a sequence of
these LBVs, each time invoking the leader-based phase, then timeout, and then
invoke the view-change phase, then we end up with a variant of the
view-by-view protocol that the LBV is instantiated with.
See illustration in Figure~\ref{fig:recProtocol}.

\begin{figure}[th]
    \begin{center}
        \includegraphics[width=0.48\textwidth]{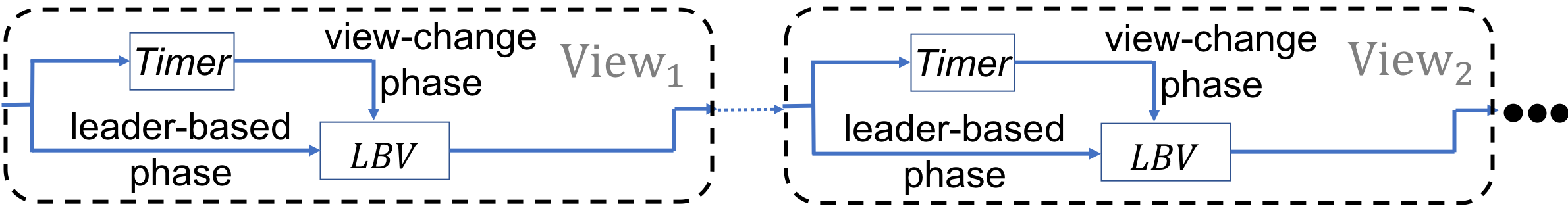}
    \end{center}
    \caption{Using a sequence of LBV instances to reconstruct a
    partially synchronous leader-based view-by-view protocol.}
    \label{fig:recProtocol}
\end{figure} 

ACE's second technical contribution is a novel \emph{wave} mechanism
to control the trigger of view-change phases.
Rather than using timers, the wave mechanism uses the API provided by
the LBV abstraction to generically rearrange views in view-by-view
protocols.
The key idea is to compose several LBV abstractions in a way that
allows progress at network speed during periods of asynchrony.
%
%
%
This mechanism exploits a key property shared by all view-by-view
protocols: If the leader of a view is correct and timers never
expire, then eventually a decision will be made in this
view.

ACE's single-shot agreement protocol proceeds in a
\emph{wave-by-wave} manner, each wave operates as follows:
Instead of running one LBV instance (as view-by-view protocols do),
a wave runs $n$ LBV instances (the leader-based phase) simultaneously,
each with a distinct designated leader.
Then, the wave performs a barrier synchronization in which
parties wait until a quorum of the instances have reached a decision.
Recall that by the key property, all instances with correct leaders
eventually decide so eventually the barrier is reached.

After the barrier is reached, one LBV instance is selected
unpredictably and uniformly at random.
The chosen instance ``wins'', and all other instances are ignored.
Then, parties use the LBV's API to invoke the view-change phase in the
chosen instance (only).
The view-change phase here has two purposes.
First, it \emph{boosts termination}. If the chosen instance has
reached a decision, meaning that a significantly large quorum of
parties decided, then all correct parties learn this decision during
the view-change phase.
Second, as in every view-by-view protocol, the view-change
phase \emph{ensures safety} by forcing the leaders of the next wave
to propose safe values.


The next wave enacts $n$ new LBV instances, each with a different
leader that proposes a value according to the state returned
from the view-change phase of the chosen instance of the previous
wave.
Note that since parties wait for a large quorum of LBV instances to
reach a decision in each wave before randomly choosing one, the chosen
LBV has a constant probability of having a decision, hence,
together with the termination boosting provided by the view-change
phase, we get progress, in expectation, in a constant number of
waves.

As to SMR, ACE implements a variant in which parties do not proceed
to the next slot before they learn the decision value of the current
one, but once they move to the next one they stop participating in
the current slot and garbage collect all the associated resources.
Deferring next slots until the current decision is known is essential
for systems in which the validity of a value for a certain slot
depends on all previous decision values (e.g., Blockchains).
ACE's SMR solution uses an instance of our single-shot protocol
for every slot together with a forwarding mechanism to help slow
parties catch-up.

\paragraph{Applications.}
ACE can take any view-by-view consensus protocol designed for the
partially synchronous model and transform it into an asynchronous SMR
solution.
%
In order to instantiate ACE with a specific algorithm,
e.g., PBFT~\cite{pbft} or Paxos~\cite{lamport2001paxos}, one must
take the core of the algorithm logic (a single view) and wrap it with
the LBV's API that provides an \propose method to start the
leader-based phase and a \vc method to proceed to the view-change
phase.
We define the properties each LBV implementation has to satisfy, and
argue that all existing leader-based view-by-view algorithms
implicitly satisfy these properties.
Therefore, instantiating ACE does not require new logic
implementation beyond the engineering effort of providing its API.
Furthermore, ACE's modularity provides a clean separation of
concerns between the Safety (provided by the LBV properties) and
asynchronous Liveness (provided by the framework).
With ACE, designing a new fully asynchronous SMR system based on an
existing partially synchronous consensus protocol or updating a
deployed system with a novel agreement protocol, only requires an
LBV implementation.

An important feature of ACE is fairness. 
Due to the inherent unpredictable randomness in the election
mechanism, ACE provides an equal chance for all correct parties to
drive decisions during synchrony, and optimally bounds the
probability to decide a value proposed by a correct party during
asynchrony.
Another important feature of ACE is its model agnostic,
namely it does not add any model assumptions on top of the
assumptions made by the instantiated protocols.
As a result, when instantiated with a BFT protocol such as
PBFT~\cite{pbft} we get an asynchronous byzantine state machine
replication, and when instantiated with a crash-failure solution like
Paxos~\cite{lamport2001paxos} or Raft~\cite{ongaro2014search} we get
the first asynchronous SMR system tolerating any minority of
failures.

In order to make ACE model agnostic, we encapsulate the
barrier synchronization and the random election mechanisms into
abstractions as well, denoted \emph{barrier} and
\emph{leader-election} respectively, and define their properties.
So that when instantiating our framework, one need also provide
implementation of these abstractions for the assumed model.
In Section~\ref{sec:evaluation} we give an implementation
example for the byzantine model.

\paragraph{Complexity and performance.}

To demonstrate ACE, we choose to focus on the byzantine
model as this is the model considered by Blockchain systems and
we believe that, due to their high stakes, Blockchain systems will
benefit the most from a generic asynchronous SMR solution that can
tolerate network attacks.
We implement ACE's algorithms in C++ and instantiate the LBV
abstraction with a variant of HotStuff~\cite{hotstuff} -- a state of
the art BFT solution, which is currently being implemented in several
commercial Blockchain systems~\cite{libra}.
We compare the ACE instantiation to the base (raw) HotStuff
implementation in different scenarios. Our evaluation shows that
while base HotStuff outperforms ACE (instantiated with HotStuff) in
the optimistic, synchronous, failure-free case, ACE has absolute
superiority during asynchronous periods and network attacks. 
For example, we show that byzantine parties can hinder progress in
base HotStuff by targeting leaders with a DDoS attack, whereas ACE
manages to commit values at network speed.

From a theoretical perspective, ACE generalizes the idea in
VABA~\cite{vaba}, the first asymptotically optimal asynchronous byzantine agreement
protocol, and adds only factor $n$ in communication complexity over
the leader-based phase in each view of the protocol it is instantiated
with.
Therefore, since no protocol can solve byzantine agreement with less
than quadratic communication~\cite{dolev1983authenticated},
ACE can leverage protocols with linear communication
in the leader-based phases, like HotStuff~\cite{hotstuff}, to
reproduce the asymptotically optimal quadratic asynchronous byzantine
agreement solution of VABA~\cite{vaba}.


\paragraph{Roadmap.}
The rest of the paper is organized as follows:
Section~\ref{sec:model-and-problem} describes the model and
formalizes the agreement and SMR problems. 
Section~\ref{sec:VbV} gives an overview of the view-by-view paradigm,
capturing its core properties and vulnerabilities. 
Section~\ref{sec:framework} defines ACE's abstractions, and its
algorithms are given in Section~\ref{sec:algorithms}. 
Section~\ref{sec:evaluation} instantiates ACE and evaluates its
performance.
Finally, Section~\ref{sec:related-work} discusses related work and
Section~\ref{sec:discussion} concludes.

\section{Model and Problem Definitions}
\label{sec:model-and-problem}

Section~\ref{sub:model} formally define the model and the
Agreement and SMR problems are defined in
Section~\ref{sub:problem-definition}.

\subsection{System Model}
\label{sub:model}

\paragraph{Communication.}

We consider a peer to peer system with $n$ parties, $f<n$ of which may
fail.
We say that a party is \emph{faulty} if it fails at any time during an
execution of a protocol.
Otherwise, we say it is \emph{correct}.
In a peer to peer system every pair of parties is connected with a
\emph{communication link}.
A message sent on a link between two correct parties is guaranteed to
be delivered, whereas a message to or from a faulty party might be
lost.
A link between two correct parties is \emph{asynchronous} if the
delivery of a message may take arbitrary long time, whereas
a link between two correct parties is \emph{synchronous} if there is
a bound $\Delta$ for message deliveries.
In \emph{asynchronous network periods} all links among correct parties
are asynchronous, whereas during \emph{synchronous network periods}
all such links are synchronous.

A standard communication model assumed by algorithms that follow
the view-by-view paradigm is the \emph{partially synchronous}
model\footnote{Sometimes referred as eventual synchrony in the
literature.}~\cite{dwork1988consensus}.
In this model, there is an unknown point in every execution, called
\emph{global stabilization time (GST)}, which divides the execution
into two network periods: before GST the network is asynchronous
and after GST the network is synchronous.
The partially synchronous model was defined to capture spontaneous
network disconnections in wide-area networks, in which case it is
reasonable to assume that asynchronous periods are short and
synchronous periods are long enough for the protocols to make
progress.

However, the partially synchronous model fails to capture malicious
attacks that intentionally try to sabotage progress, and thus are
not suitable for many current use cases (e.g., Blockchains).
For example, one possible attack is the \emph{weakly adaptive
asynchronous} in which an attacker adaptively blocks one party at a
time from sending or receiving messages (e.g., via DDOS).
This results in a \emph{mobile} asynchrony that moves from
party to party, violating the GST assumption made by the partially
synchronous model, and thus prevents progress from all leader-based
view-by-view algorithms.

ACE, in contrast, assumes the fully asynchronous communication model,
and thus progress in network speed under all network conditions and
attacks as long as messages among correct parties are eventually
delivered.

%
%
%
%

\paragraph{Failures, cryptography, and cetera.}

As mentioned in the Introduction and explained in more detail below,
ACE abstracts away specific model assumptions and
implementation details into three primitives: \emph{Leader based
view (LBV)}, \emph{leader-election}, and \emph{barrier}.
In Section~\ref{sec:framework}, we define the properties of these
primitives and require that any leader-based view-by-view protocol
that is instantiated into our framework satisfies them.
To satisfy these properties, each protocol may have different
model assumptions:
for example, the relation between $f$ and $n$, the type of
failures that may occur (e.g., crash and byzantine), and
cryptographic assumptions.
ACE inherits the specific assumptions made by
each of the protocols it is instantiated with, and
adds nothing to them. In other words, whatever assumptions are made by the instantiated
protocol in order to satisfy the abstractions' properties, are
exactly the assumptions under which ACE operates.



%
%


\subsection{Problem Definition}
\label{sub:problem-definition}
We now formally define the problems ACE implements.
We start with a fair validated single-shot agreement definition, and
then define the SMR problem which is a generalization of the single
shot agreement into a multi-shot problem.


\paragraph{Fair validated single-shot agreement.}
The \emph{fair validated agreement}~\cite{vaba, Cachin2000RandomOI,
CachinSecure} is single-shot problem in which correct parties propose
externally valid values and agree on one unique such value.
The formal properties are given below:  

\begin{itemize}
    
  \item Agreement: All correct parties that decide, decide on the
  same value.

  \item Termination: If all correct parties propose 
  valid values, then all correct parties decide with probability $1$.
  
  \item Validity: If a correct party decides an a value $v$, then $v$
  is externally valid.

\end{itemize}

\noindent Note that the agreement and termination properties are not
enough by them self to guarantee real progress of any multi-shot
agreement system (e.g., Blockchain) that is built on top of the
single-shot problem.
Without external validity, parties are allowed to agree on some
pre-defined value (i.e., $\bot$)~\cite{mostefaoui2017signature}, which is
basically an agreement not to agree.
Moreover, as long as a value satisfies the system's external validity
condition (e.g., no contradicting transactions in a blockchain
system), parties may decide on this value even if it was proposed by a
byzantine party.
However, since high stake is involved and byzantine parties may
try to increase the ratio of decision values proposed by them, we
require an additional fairness property that is a generalization of
the \emph{quality} property defined in~\cite{vaba}:

\begin{itemize}
    
  \item Fairness: The probability for a correct party to decide on a
  value proposed by a correct party is at least $1/2$. Moreover,
  during synchronous periods, all correct parties have an equal
  probability of $1/n$ for their values to be chosen. 
  
\end{itemize}

\noindent Intuitively, note that by simply following the protocol
byzantine parties can have a probability of $1/3$ (recall that $1/3$
of the parties are byzantine) for their value to be chosen in every
protocols even during synchronous periods. 
And since during asynchronous periods the adversary can, in addition,
block $1/3$ of the correct parties, we get that byzantine parties can
increase their probability to $1/2$.
Meaning that the fairness property we require is optimal.

\paragraph{Fair state machine replication.}

A state machine replication~\footnote{Sometime referred to as atomic
broadcast~\cite{CachinOPODIS}} (SMR) is a generalization of a
single-shot agreement problem into a multi-shot agreement
system~\cite{lamport2001paxos}.
Informally, an SMR system agrees on a (possibly infinite) sequence of
valid values. 
Formally, every correct party inputs with a (possible infinite)
sequence of externally valid values, and outputs a sequence of tuples
of the form $\lr{sq,v}$, where $sq \in \{1,…,N\}$ for an arbitrary
large $N$ and $v$ is an (externally valid) value.
Note that each output event contains exactly one such tuple, and in
in the rest of the paper we sometimes refer to the sequence number
$sq$ as a \emph{slot} number.
%
%
An implementation of a fair \emph{SMR} must satisfies the following
properties:

\begin{itemize}
  
  \item Integrity: For every $sq \in \{1,\ldots,N\}$, a correct party
  outputs at most one tuple $\lr{sq,v}$.
  
  \item Validity:  For every $sq \in \{1,\ldots,N\}$, if a correct
  party outputs a tuple $\lr{sq,v}$, then $v$ is externally valid.
  
  \item Termination: For every $sq \in \{1,\ldots,N\}$, all correct
  parties eventually output a tuple $\lr{sq,v}$ for some $v$ with
  probability $1$.
  
  \item Agreement: For every $sq \in \{1,\ldots,N\}$, if two correct
  parties output $\lr{sq,v}$ and $\lr{sq,v'}$, then $v = v'$.
  
  \item Fairness:  For every $sq \in \{1,\ldots,N\}$, the probability
  for a correct party to output $\lr{sq,v}$ s.t. $v$ was proposed by
  a correct party is at least $1/2$.
  
\end{itemize}

\noindent In order to capture the requirements made by systems
like Blockchains in which the validity of a value proposed in slot $i$
depends on all the decision values from slots 1 to $i-1$, we add an
additional property to our SMR definition: 

\begin{itemize}
 
  \item FIFO: For every $sq \in \{2,\ldots,N\}$, if a correct party
  $p$ outputs a tuple $\lr{sq,v}$ for some $v$, then for every $1
  \leq sq' < sq$ $p$ previously outputted $\lr{sq',v'}$ for some~$v'$.
\end{itemize}

\noindent Moreover, while from a theoretical point of view we usually
only care about the total resources consumed by correct parties up to
the point when they all decide, ignoring the resources used from this point on.
From a practical point of view, systems
wish to garbage collect all resources allocated for a specific slot
immediately after a decision for this slot has been made.
Since it is not always straight forward to map resources into slots in
which they are consumed, we capture the above by requiring following
property:

\begin{itemize}
 
  \item Strong halting: For any arbitrary large $N$,
  eventually all resources are de-allocated.  
\end{itemize}

%
%
%

%
%

\noindent Note that the standard way to achieve halting
is by the so called \emph{state
transfer} in which parties reliably broadcasts the decision value of
each slot
to all parties before
de-allocating all the associated resources and moving to the next slot.
This approach requires quadratic communication and is the
approach we take in this paper.
Dolev and Strong~\cite{dolev1983authenticated} have shown that even
in a synchronous setting this quadratic communication is unavoidable,
so at least asymptotically, this does not introduce a new overhead.

\section{The View-by-View Paradigm}
\label{sec:VbV}

Many (if not all) practical agreement and consensus
algorithms operate a leader-based view-by-view paradigm, designed for
partially synchronous models, including the seminal work of Dwork et
al.~\cite{dwork1988consensus} pioneering the approach, and underlying
classical algorithms like Paxos~\cite{lamport2001paxos},
Viewstamped-Replication~\cite{oki1988viewstamped}, PBFT~\cite{pbft},
and others~\cite{zyzzyva,ongaro2014search}.

Protocols designed according to the view-by-view paradigm advance in
views.
Every view has a designated leader that proposes a value and tries to
convince other parties to decide on it.
In order to tolerate faulty leaders from halting progress forever,
parties use timers to measure leader progress; if no
progress is made they demote the leader, abandoning
the current view and proceeding to the next one.


The main problem with this approach is that a faulty leader that does
not send any messages is indistinguishable from a correct leader with
asynchronous links.
Therefore, protocols implementing this approach are not able to
guarantee progress during asynchronous periods or weakly adaptive
asynchronous attacks since parties advance views before correct
leaders are able to drive decisions.

\subsection{Core properties}
\label{sub:core}

\paragraph{Safety.}

Perhaps the most important property of algorithms designed according
to the view-by-view paradigm is their ability to satisfy safety
during arbitrary long asynchronous periods.
This is achieved via a careful \emph{view-change} mechanism that
governs the transition between views.
View-change consists of parties \emph{wedging} the current
view by abandoning the current leader, and \emph{exchanging}
information about what might have committed in the view (the closing state of
the view).
In the new view, parties participate in the new leader's
phase only if it proposes a value that is safe in accordance with the
closing state.

\paragraph{Liveness.}

Algorithms that rely on leaders to drive progress cannot
guarantee progress during asynchronous periods since they cannot
distinguish between faulty leaders and correct ones with
asynchronous links.
During asynchronous periods, messages from
the current leader may be delivered only after parties timeout and
move to the next view regardless of how conservative the timeouts
are set.

However, all these algorithms share an important property that
our framework utilizes: for every view, if the leader of the view is
correct and no correct party times out and abandons this view, then
all correct parties decide in this view.


\subsection{Practical Vulnerabilities}
\label{sub:vulnerabilities}

Deploying view-by-view algorithms requires tuning the leader
timeouts.
On the one hand, aggressive timeouts set close to the common network delay might 
cause correct leaders to be demoted due to spurious delays, and destabilize the
system. On the other, conservative timeouts implies delayed actions in case of
faulty leaders. It further opens the system to possible attacks by
byzantine leaders that slow system progress to the maximum possible
without triggering a timeout.

Another attack on the progress of leader-based protocols is the weak
adaptive asynchrony in which an attacker blocks
communication with the leader of each view until the view expires,
e.g., via distributed denial-of-service attack.
Last, a carefully executed adaptive asynchrony attack can cause a
fairness bias. 
Some leaders (possibly byzantine) may be allowed to
progress and commit their values, whereas an attacker blocks
communication with other designated (possibly all correct) leaders.
In Section~\ref{sec:evaluation}, we demonstrate the above attacks,
and show that ACE is resilient
against them.


\section{Framework abstractions}
\label{sec:framework}

ACE provides
``asynchronous boosting'' for partially
synchronous protocols designed 
according to the leader-based view-by-view paradigm.
In a nutshell, ACE takes such a protocol, encapsulates a single view
of the protocol into a \emph{leader-based view (LBV)} abstraction
that provides API to avoid timeouts, composes LBVs into a wave of $n$
instances running in parallel, interjects auxiliary actions in
between successive waves, and chooses one LBV instance
retrospectively at random.
Detailed description is given in the next section.
Section~\ref{sub:LBV} defines the
\emph{Leader based view (LBV)} abstraction 
and Section~\ref{sub:auxiliary} introduces auxiliary abstractions
utilized by ACE.

\subsection{Encapsulating view-based agreement protocols}
\label{sub:LBV}


As explained above, each view in a leader-based view-by-view
algorithm consists of two phases: First, all parties wait for the
leader to perform the \emph{leader-based} phase to drive
decision on some value $v$, and then, if the leader fails to do it fast enough,
parties switch to the \emph{view-change} phase in which they \emph{wedge}
the current leader and \emph{exchange} information in order to get the
closing state of the view.
To decide when to switch between the phases, existing algorithms use
timeouts, which prevent them from guaranteeing progress during
asynchronous periods.
Therefore, in order to boost asynchronous liveness, ACE replaces the
timeout mechanism with a different strategy to switch between the phases.  
To this end, the LBV abstraction exposes an API with two
methods, \propose and \vc, where \propose starts the first
phase of the view (leader-based), and \vc switches to the second
(view-change).
By exposing API with these two methods, we remove the
responsibility of deciding when to switch between the phases from the view
(e.g., no more timeouts inside a view) and give it to the framework,   
while still preserving all safety guarantees provided by each view in
a leader-based view-by-view protocol.


Every instance of the LBV abstraction is parametrized with the
leader's name and with an identification $id$, which contains
information used by the high-level agreement algorithm built (by the
framework) on top of a composition of LBP instances. 
The \vc method gets no parameters and returns a tuple $\lr{s,v}$,
where $v$ is either a value or $\bot$; and $s$ is the closing state of
the instance, which consists of all the necessary information
required by the specific implementation of the abstraction (e.g., a
safe value for a leader to propose and a validation function all parties use to check
if the proposed value is safe). 
The \propose method gets 
the ``closing state'' $s$ that was returned from \vc in the
preceding LBV instance (or the initial state in case this is the
first one), and outputs a value $v$. 
Intuitively, the returned value from both methods is the ``decision''
that was made in the LBV instance, but as we explain below, the
high-level agreement algorithm might choose to ignore this value.

The safety of view-by-view algorithms strongly relies
on the fact that correct parties start a new view with the closing
state of the previous one.
Otherwise, they cannot guarantee that correct parties that decide in
different views decide on the same value. 
Therefore, when we encapsulate a single view in our LBV abstraction
and define its properties, we consider only executions in which the
LBV instances are composed one after another. 
Formally, we say that the LBV abstractions are \emph{properly
composed by a party $p_i$} in an execution if $p_i$ invokes the
\propose of the first instance with some fixed initial state (which
depends on the instantiated protocol), and for every instance $k>1$,
$p_i$ invokes its \propose with the state output of \vc of instance
$k-1$.
In addition, we say that the LBV abstractions are \emph{properly
composed} in an execution if they are properly composed by all correct
parties.
Figure~\ref{fig:LBVproperUsage} illustrates LBV's API and its properly
composed execution.

\begin{figure}[th]
    \centering
        \includegraphics[width=0.48\textwidth]{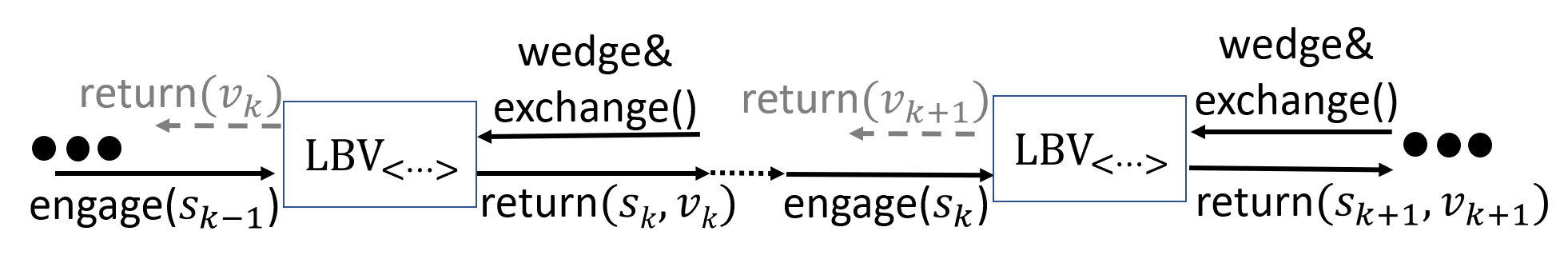}
        \caption{A properly composed execution: The \propose method of
        instance $k>1$ gets the state output of the \vc method of
        instance $k-1$.}
    \label{fig:LBVproperUsage}
\end{figure}

The formal definition of the LBV abstraction is as follows:

\begin{definition}

A protocol implements an LBV abstraction  if the following properties
are satisfied in every properly composed execution that consists of a
sequence of LBV instances:
 
\paragraph{Liveness:}
 
\begin{itemize}
 
    \item \textbf{Engage-Termination:} For every instance with a
  correct leader, if all correct parties invoke \propose and no
  correct party invokes \vc, then \propose invocations by all correct
  parties eventually return.
  
  \item \textbf{Wedge\&Exchange-Termination:} For every instance, if
  all correct parties invoke \vc then all \vc by correct parties
  eventually return.

\end{itemize}


\paragraph{Safety:}

\begin{itemize}
  
  \item \textbf{Validity:} For every instance, if an \propose or\\ \vc
  invocation by a correct party returns a value $v$,
  then $v$ is externally valid.
  
  \item \textbf{Completeness:} For every instance, if $f+1$
  \propose invocations by correct parties return, then
  no \vc invocation by a correct party returns a value $v =
  \bot$.
  
  \item \textbf{Agreement:} If an \propose or \vc  invoked in some
  instance by a correct party returns a value $v \neq \bot$ and some
  other \propose or \vc invoked in some instance by a correct party
  returns $v' \neq \bot$ then $v = v'$.

\end{itemize}

\end{definition}


Note that during the view-change phase in most leader-based
protocols, parties send the closing state only to the leader of the
next view.
However, in ACE, since we run $n$ concurrent LBV instances,
each with a different leader, we need all parties to learn the
closing state after \vc returns.
Moreover, as mentioned above and captured by the Completeness
property, we use \vc to also boost decisions in order to guarantee
that if the retrospectively chosen LBV instance successfully
completed the first (leader-based) phase, than all correct parties
decide at the end of its second phase.
Therefore, when encapsulating the view-change mechanism of a
leader-based protocol into the \vc method, a small change has to
be made in order to satisfy the above properties.
Instead of sending the closing state only to the next leader, parties
need to exchange information by sending the closing state to all parties
and wait to receive $n-f$ such messages.
No change is needed to the first phase of the
encapsulated leader-based protocol since
all the required properties for \propose are implicitly satisfied.


\subsection{Auxiliary abstractions}
\label{sub:auxiliary}

We now define two additional abstractions required by ACE.
Similarly to the LBV abstraction, each instance is parametrized with
an identification $id$, and the implementation details and 
model assumptions are abstracted away.


\paragraph{Barrier.}
The barrier abstraction is used to synchronize
between parties, ensuring that correct parties wait for each other before progressing.
The abstraction exposes two API methods: \barrierready and
\barriersync.
Below we define the properties each barrier implementation must
satisfy:

\begin{definition}

An implementation of the Barrier abstraction must satisfy the
following properties:

\begin{itemize}
  
  \item \textbf{B-Coordination:} No \barriersync invocation by a
  correct party returns before $f+1$ correct parties invoke
  \barrierready.
  
  \item \textbf{B-Termination:} If all correct parties invoke
  \barrierready then all \barriersync invocations by correct
  parties eventually return.
  
  \item \textbf{B-Agreement:} If some \barriersync invocation by
  a correct party returns, then all \barriersync invocations by
  correct parties eventually return.
  
\end{itemize}

\end{definition}

\paragraph{Leader-election.}
Our Leader-election abstraction is similar to the one defined
in~\cite{vaba}, which exposes one operation, \elect, to elect
a unique leader.
The formal properties are given below.

\begin{definition}

An implementation of the Leader-election abstraction must satisfy the
following properties:

\begin{itemize}
  
  \item \textbf{L-Termination:} If $f+1$ correct parties invoke
  \elect, then all \elect invocations by correct parties return.
  
  \item \textbf{L-Agreement:} All invocations of \elect by correct
  parties that return, return the same party.
  
  \item \textbf{L-Validity:} If an invocation of \elect by a
  correct party returns, it returns a party $p_i$ with probability $1
  / n$ for every $p_i \in \{p_1,\ldots,p_n\}$.
  
  \item \textbf{L-Unpredictability:} The probability of the adversary
  to predict the returned value of an \elect invocation by a
  correct party before any correct party invokes \elect is at most
  $1 / n$.
  
\end{itemize}

\end{definition}

%
%
%
%
%
%
%
%

\noindent In Section~\ref{sec:evaluation} we give example
implementations of these abstractions in the byzantine failure model.

\section{Framework algorithms}
\label{sec:algorithms}

In this section we present ACE's asynchronous boosting
algorithms, which are built on top of the abstractions defined above.
We first present in Section~\ref{sub:singleShot} an algorithm for an
asynchronous single-shot agreement, and then, in
Section~\ref{sub:SMR} we show how to turn it into an asynchronous SMR.
For completeness, in Section~\ref{sec:recAlg}, we show how to
use the LBV abstraction to reconstruct a variant of the base
partially synchronous view-by-view algorithm the LBV is instantiated
with.

\subsection{Asynchronous fair single-shot agreement}
\label{sub:singleShot}

The pseudocode for the asynchronous single-shot agreement
protocol appears in Algorithm~\ref{alg:AS} and a formal
correctness proof is given in Section~\ref{sec:proof}.
An invocation of the protocol (\emph{SS-propose}$(id,S)$) gets an
initial state $S$ and identification $id$, where the initial state
$S$ contains all the initial specific information (including the
proposed value) required by the leader-based view-by-view protocol
instantiated in the LBV abstraction.

\begin{algorithm}
\caption{Asynchronous single-shot agreement.} 
\begin{algorithmic}[1]
\footnotesize

\Upon{\emph{SS-propose(id,S)}}

\State $state \gets S$ ; $wave \gets 1$

\While {true}

	\State $ID \gets \lr{id,wave}$
	\State $\lr{state',value} \gets \Call{wave}{ID, state}$

	\If{$value \neq \bot$ and did not decide before}
	
		\State \textbf{decide} $\lr{id,value}$
	
	\EndIf
	\State $state \gets state'$
	\State $wave \gets wave+1$ 
\EndWhile

\EndUpon

\Statex

\Procedure{wave}{$ID, state$}
  
	
	\ForAll{$p_j=p_1,\ldots,p_n$}
		\State invoke \propose$_{\lr{ID,p_j}}(state)$
		\Comment{non-blocking}
	\EndFor
	
	\State \barriersync$_{ID}()$
	
	
	\State $leader  \gets $ \emph{$elect_{ID}()$}


	\State \textbf{return} \vc$_{\lr{ID,leader}}()$
	

\EndProcedure

\Statex 

\Upon{\propose$_{\lr{ID,p_j}}$ returns $v$}
 
 	\State send ``ID, $\textsc{engage-done}$'' to party $p_j$
 
\EndUpon

\Statex

\Upon{receiving $n-f$ ``ID,$\textsc{engage-done}$'' messages }
 
	\State invoke \barrierready$_{ID}()$
 
\EndUpon

\end{algorithmic}
\label{alg:AS}
\end{algorithm}

The protocol proceeds in a \emph{wave-by-wave} manner. 
The state is updated
at the and of every wave and a decision is made the first time a wave
returns a non-empty value.
In every wave, each party first invokes the \propose
operation in $n$ LBV instances, each with a
different leader.
Each invocation gets the state obtained at the end of the previous
wave or the initial state if this is the first wave.

Then, parties invoke \barriersync and wait for it to return.
Recall that by the B-Coordination property, \barriersync
returns only after $f+1$ correct parties invoke \barrierready.
When an \propose invocation in an LBV instance with leader $p_j$
returns, a correct party sends an ``$\textsc{engage-done}$''
message to party $p_j$, and whenever a party gets $n-f$ such messages
it invokes \barrierready.
Denote an LBV instance as \emph{successfully completed} when $f+1$
correct parties completed the first phase, i.e., their \propose
returned, and note, therefore, that a correct party invokes
\barrierready only after the LBV instance in which it acts as
the leader was successfully completed.
Thus, a \barriersync invocation by a correct party returns
only after $f+1$ LBV instances successfully completed.

\begin{figure}[th]
    \begin{center}
        \includegraphics[width=0.48\textwidth]{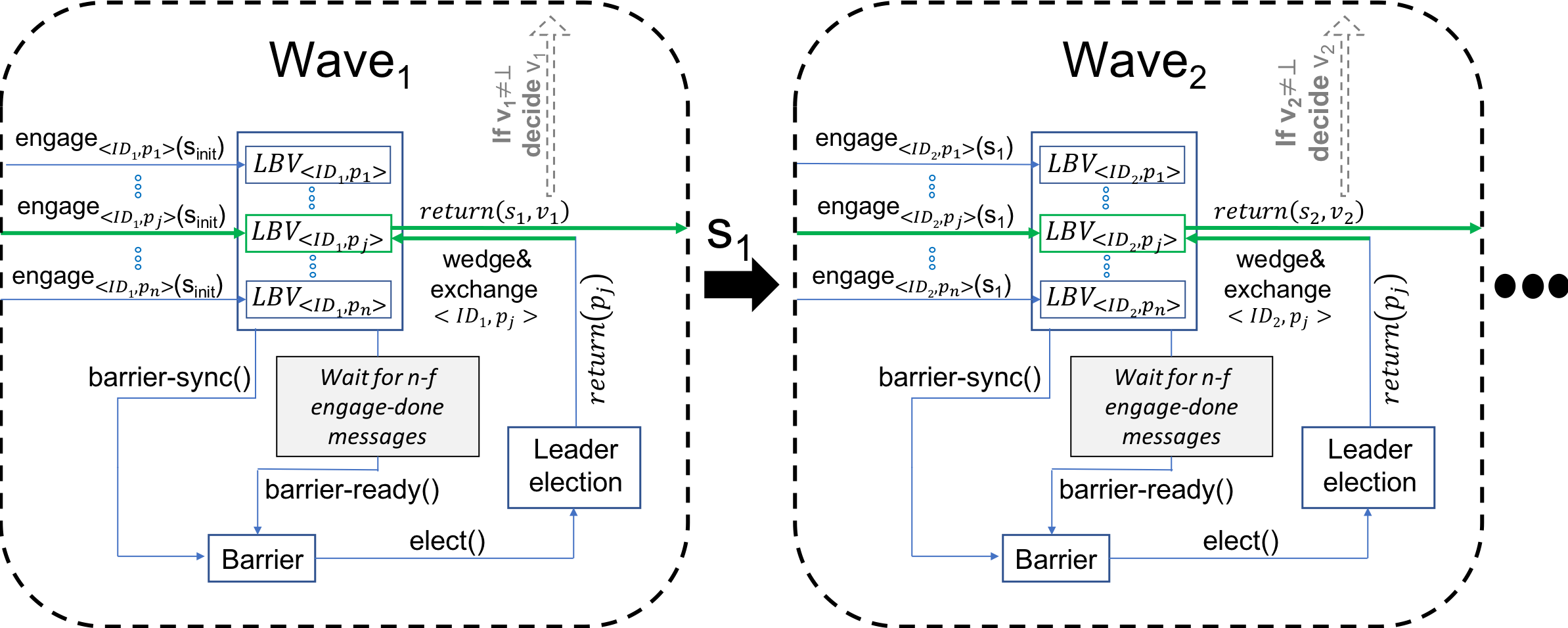}
    \end{center}
    \caption{Asynchronous single-shot
    algorithm. The chosen LBVs, which are marked in
    green, are properly composed.}
    \label{fig:ASprotocol}
\end{figure}  

Next, when the \barriersync returns, parties elect a unique
leader via the leader-election abstraction, and further consider
only its LBV instance.
Note that since parties wait until $f+1$
LBV instances have successfully completed before electing the
leader, with a constant probability of $\frac{f+1}{n}$ the
parties elect a successfully completed instance\footnote{This can
be improved to $\frac{2f+1}{n}$ in the byzantine case with $n=3f+1$
if parties attach completeness proofs to \barrierready messages.}, and
even an adaptive adversary has no power to prevent it.

Finally, all parties invoke \vc in the elected LBV instance to wedge
and find out what happened in its first phase, using the returned
state for the next wave and possibly receiving a decision value.
By the Completeness property of LBV, if a successfully completed LBV
instance is elected, then all \vc invocations by correct parties
return $v \neq \bot$ and thus all correct parties decide $v$ in this
wave.
Therefore, after a small number of $\frac{n}{f+1}$ waves all
correct parties decide in expectation.
Note that the sequence of chosen LBV instances form a properly
composed execution, and thus since parties return only values
returned from chosen LBVs, our algorithm inherits its safety
guarantees from the leader-based protocol the LBV is instantiated
with.
An illustration of the algorithm appears in
Figure~\ref{fig:ASprotocol}.

\subsubsection{Correctness Proof}
\label{sec:proof}

We prove that Algorithm~\ref{alg:AS} satisfies 
Agreement, Termination, Validity, and Fairness properties
in the asynchronous communication model.
We start by proving the Agreement and Validity properties.

\begin{observation}
\label{obs:compose}

The chosen LBV instances in Algorithm~\ref{alg:AS} form a properly
composed execution.

\end{observation}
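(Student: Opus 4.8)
The plan is to unwind the definition of a properly composed execution and check it against the control flow of the \textsc{wave} procedure in Algorithm~\ref{alg:AS}. Recall that an execution is properly composed if every correct party invokes \propose of the first (chosen) instance with the fixed initial state, and for every later instance invokes \propose with the state returned by its own \vc invocation on the immediately preceding (chosen) instance. So the first task is to identify precisely which LBV instances are the ``chosen'' ones and to argue that all correct parties agree on this sequence; only then does the phrase ``instance $k-1$'' in the definition become unambiguous.

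First I would establish that the chosen instances form a single, globally well-defined sequence indexed by the wave number. In wave $k$ every correct party sets $ID = \lr{id,k}$ --- the identifier $id$ is fixed by the top-level \emph{SS-propose} call and $wave$ is incremented deterministically in the \textbf{while} loop --- and then computes $leader \gets elect_{ID}()$. By L-Agreement, all correct parties whose \elect returns in wave $k$ obtain the same party $\ell_k$, so the chosen instance of wave $k$ is the common instance $\lr{\lr{id,k},\ell_k}$, and every correct party that completes wave $k$ invokes \vc on exactly this instance (the \textbf{return} line of \textsc{wave}). Hence there is one sequence of chosen instances, instance $1$, instance $2$, \ldots, on which all correct parties operate.

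Next I would trace the threading of the $state$ variable for a fixed correct party $p_i$. In wave $1$, $p_i$ enters \textsc{wave} with $state = S$, its initial state from \emph{SS-propose}, and invokes \propose of the chosen instance $\lr{\lr{id,1},\ell_1}$ with this $state$; this meets the initial-state requirement. For $k>1$, the assignment $state \gets state'$ executed at the end of wave $k-1$ sets $state$ to the first component of the pair returned by \textsc{wave} in wave $k-1$, which is exactly the output of $p_i$'s own \vc invocation on the chosen instance of wave $k-1$. Thus when $p_i$ enters wave $k$ and invokes \propose of the chosen instance $\lr{\lr{id,k},\ell_k}$, it does so with the state output of its \vc on the chosen instance of wave $k-1$, which is precisely the chaining condition. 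The fact that $p_i$ also invokes \propose on the $n-1$ non-chosen instances of each wave is immaterial: their \vc is never invoked and they are discarded, so they contribute nothing to the composed sequence.

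Since this argument is uniform over correct parties, all of them chain their \propose calls along the same sequence of chosen instances in the required manner, so the chosen instances are properly composed. The only non-syntactic ingredient is L-Agreement, which guarantees that ``the chosen instance of wave $k$'' is unambiguous across parties; everything else is a direct reading of the loop and the $state \gets state'$ assignment, so I do not expect a substantive obstacle --- the main care is simply in aligning the per-party chain of \vc outputs with the definition's global indexing.
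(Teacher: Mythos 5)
Your proof is correct and follows exactly the reasoning the paper relies on: the paper states this as an unproved observation, implicitly justified by reading the code (the $state \gets state'$ threading and the initial state $S$ in wave~1), which is precisely what you spell out. Your explicit appeal to L-Agreement to make ``the chosen instance of wave $k$'' globally well-defined across correct parties is a detail the paper leaves tacit, but it is the right ingredient and matches the paper's intent.
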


\begin{lemma}
\label{claim:AS:V_A}

Algorithm~\ref{alg:AS} satisfies \emph{Validity} and
\emph{Agreement}.

\end{lemma}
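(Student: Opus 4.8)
The plan is to leverage the LBV abstraction's Safety properties (Validity and Agreement) together with the structural fact, recorded in Observation~\ref{obs:compose}, that the chosen LBV instances form a properly composed execution. The key insight is that Algorithm~\ref{alg:AS} only ever returns (decides on) values that were produced by the \vc invocation of an \emph{elected} LBV instance, i.e., a value that appears in the properly composed sub-sequence of chosen instances. So the two properties of the single-shot protocol should follow almost directly from the corresponding properties of the LBV abstraction, once we verify that the hypotheses of those LBV properties are met in our composition.

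\textbf{Validity.} First I would observe that when the protocol executes \textbf{decide} $\lr{id,value}$, the $value$ is exactly the value $v$ returned by \vc of the elected instance in the current wave (the value component of the tuple returned by \Call{wave}{}). By Observation~\ref{obs:compose}, this instance belongs to a properly composed execution, so the LBV \textbf{Validity} property applies: any value $v$ returned by an \vc (or \propose) invocation of a correct party in a properly composed execution is externally valid. Since a correct party decides $value$ only when $value \neq \bot$ and $value$ is the return of its own \vc invocation, $value$ is externally valid, establishing Validity of the single-shot agreement.

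\textbf{Agreement.} Next I would argue that if two correct parties decide, they decide the same value. A correct party decides only on a non-$\bot$ value returned from the \vc (or, across waves, possibly the \propose) of a chosen instance. By Observation~\ref{obs:compose} all such returns occur within a single properly composed execution, so the LBV \textbf{Agreement} property applies verbatim: if one correct party's \propose/\vc returns $v \neq \bot$ and another's returns $v' \neq \bot$, then $v = v'$. Hence any two decided values coincide, giving Agreement.

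\textbf{The main obstacle} I anticipate is justifying cleanly that decisions in \emph{different} waves are also covered by the same properly composed execution, not just decisions within one wave. The subtlety is that across waves the chosen instances are linked through the state threading (the \propose of wave $k+1$ receives the closing state $state'$ returned by \vc of the chosen instance of wave $k$), and I must confirm this is precisely what ``properly composed'' requires, so that LBV Agreement—which is stated \emph{across} instances of a properly composed execution—can be invoked to rule out two correct parties deciding different values in two distinct waves. This is exactly the content Observation~\ref{obs:compose} is meant to supply, so the proof reduces to citing it correctly and noting that decided values are always drawn from chosen instances; the remaining work is just bookkeeping to confirm no decision value ever escapes the properly composed sub-sequence.
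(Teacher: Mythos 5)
Your proposal is correct and follows essentially the same route as the paper's proof: decisions are drawn only from \vc returns of the chosen instances, Observation~\ref{obs:compose} certifies these form a properly composed execution, and the LBV \emph{Validity} and \emph{Agreement} properties then transfer directly (your parenthetical worry about decisions coming from \propose is moot, since in Algorithm~\ref{alg:AS} decisions only come from the \vc output of the elected instance, and the LBV Agreement property covers \propose returns anyway). The paper states this in two sentences; you simply unfold the same argument, including the cross-wave state-threading point that the paper leaves implicit in the definition of proper composition.
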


\begin{proof}

By the code, correct parties only decide on values returned from a
\vc method invoked on one of the chosen instances.
Therefore, by Observation~\ref{obs:compose}, the Validity and Agreement
properties follow from the Safety properties required by the LBV
abstraction.

\end{proof}

\noindent We now prove that Algorithm~\ref{alg:AS} satisfies
Termination.
We start by showing that no honest party is stuck forever in a
wave.

\begin{claim}
\label{claim:AS:ready}

Consider a wave $k \geq 1$ that all correct parties start, then at
least one \barriersync invocation by a correct party eventually
returns.

\end{claim}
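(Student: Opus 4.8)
The plan is to argue by contradiction. Suppose that in wave $k$, which all correct parties start, no \barriersync invocation by any correct party ever returns. The key structural observation about Algorithm~\ref{alg:AS} is that, within \textsc{wave}, the call to \vc is reached only after \barriersync returns (and after the intervening \elect). Hence, under the contradiction hypothesis, no correct party ever reaches \elect, and therefore no correct party ever invokes \vc in \emph{any} of the $n$ LBV instances of this wave. This breaks the apparent circularity between \barriersync and \vc: the assumption that \barriersync never returns is precisely what guarantees the ``no correct party invokes \vc'' premise needed below.

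Next I would apply the \textbf{Engage-Termination} property of the LBV abstraction. Since all correct parties start wave $k$, the \textbf{ForAll} loop causes every correct party to invoke \propose in all $n$ instances; in particular, in each instance with a correct leader both premises of Engage-Termination hold (all correct parties invoke \propose, and no correct party invokes \vc). There are $n-f$ such instances, one per correct party acting as its own leader. Engage-Termination then guarantees that in each of these instances, the \propose invocations of all $n-f$ correct parties eventually return.

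I would then trace the engage-done-to-\barrierready chain. Whenever a correct party's \propose in the instance led by a correct $p_j$ returns, it sends an ``$\textsc{engage-done}$'' message (tagged with this wave's $ID$) to $p_j$. By the previous step, all $n-f$ correct parties' \propose invocations in $p_j$'s instance return, so $p_j$ receives at least $n-f$ such messages and consequently invokes \barrierready. Because this argument applies uniformly to every correct leader $p_j$, every correct party invokes \barrierready. By \textbf{B-Termination}, all \barriersync invocations by correct parties then eventually return, contradicting the hypothesis that none does; hence at least one (in fact every) correct \barriersync returns. The main obstacle is the circular dependency between \barriersync and \vc, and the whole proof hinges on resolving it: the contradiction framing is exactly what supplies the ``no \vc is invoked'' condition that unlocks Engage-Termination, after which the message-counting argument ($n-f$ engage-done messages per correct leader) is routine.
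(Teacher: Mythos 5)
Your proof is correct and follows essentially the same route as the paper's: assume by contradiction that no \barriersync returns, deduce that no correct party invokes \vc, apply Engage-Termination to the instances with correct leaders, count the $\textsc{engage-done}$ messages so that every correct party invokes \barrierready, and derive the contradiction from B-Termination. Your writeup is in fact slightly more careful than the paper's, in that you explicitly note the potential circularity between \barriersync and \vc and explain how the contradiction hypothesis resolves it.
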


\begin{proof}

Assume by way of contradiction that no \barriersync()
invocations by correct party eventually returns in wave $k$.
Thus, no correct party ever invoke \vc in wave $k$.
Therefore, since all correct parties invoke \propose in all
LBV instances in wave $k$,
we get by the Engage-Termination property that all
\emph{propose} invocation by correct parties eventually return, and
thus all correct parties eventually send $\textsc{engage-done}$
messages to all correct leaders.
Hence, all correct leaders eventually get $n-f$ $\textsc{engage-done}$
messages, and thus eventually invoke \barrierready.
The contradiction follows from the B-Termination property.

\end{proof}

\begin{claim}
\label{claim:AS:view}

For every wave $k \geq 1$, if all correct parties start wave $k$, then
all correct parties eventually complete wave $k$.

\end{claim}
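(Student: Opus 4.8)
The plan is to chain together the termination and agreement guarantees of the three abstractions in the exact order the \textsc{wave} procedure invokes them: \barriersync, then \elect, then \vc. Since all correct parties start wave $k$, each of them reaches the body of \textsc{wave}, invokes \propose in all $n$ LBV instances, and then invokes \barriersync$_{ID}()$. Completing the wave is, by definition, the return of the final \vc call, so it suffices to walk every correct party through these three steps.

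First I would dispatch the barrier. Claim~\ref{claim:AS:ready} already gives that at least one \barriersync invocation by a correct party returns in wave $k$; feeding this into the \textbf{B-Agreement} property of the barrier yields that \emph{all} \barriersync invocations by correct parties eventually return, so every correct party proceeds to the election step. Next, because every correct party now invokes \elect$_{ID}()$ and there are at least $f+1$ correct parties, \textbf{L-Termination} guarantees that all these \elect invocations return, and \textbf{L-Agreement} guarantees they all return the same party $p_\ell$. Hence every correct party invokes \vc on the single instance $\lr{ID,p_\ell}$.

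Finally I would close with the LBV liveness guarantee. Since all correct parties invoke \vc on the same instance $\lr{ID,p_\ell}$, the \textbf{Wedge\&Exchange-Termination} property forces all of these invocations to return, so all correct parties complete wave $k$. To apply this LBV property legitimately I rely on Observation~\ref{obs:compose}: the chosen instances (in particular $\lr{ID,p_\ell}$, whose \propose was fed the state returned by the chosen \vc of wave $k-1$) form a properly composed execution, which is the only setting in which the LBV guarantees are promised.

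The main obstacle is the middle link. Wedge\&Exchange-Termination is conditional on all correct parties invoking \vc on the \emph{same} instance, so the proof must establish two things before it can be applied: that every correct party actually reaches the \vc call (which needs both the barrier and the election to terminate for \emph{all} correct parties, not merely for the single party promised by Claim~\ref{claim:AS:ready}), and that they all target the same instance (via L-Agreement). Upgrading ``one \barriersync returns'' to ``all \barriersync return'' is exactly what B-Agreement buys us, and this is the step where the interplay between the abstractions is genuinely needed rather than a routine unfolding of a single property.
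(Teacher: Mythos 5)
Your proof is correct and follows essentially the same route as the paper's: Claim~\ref{claim:AS:ready} upgraded via \textbf{B-Agreement} to all \barriersync returns, then \textbf{L-Termination} for \elect, then \textbf{Wedge\&Exchange-Termination} for \vc. Your explicit invocation of \textbf{L-Agreement} (so that all correct parties target the \emph{same} LBV instance) and of Observation~\ref{obs:compose} makes precise two steps the paper leaves implicit, but the argument is the same.
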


\begin{proof}

By Claim~\ref{claim:AS:ready}, some \barriersync
invocation by a correct party eventually returns in wave $k$.
Thus, by the B-Agreement property, all \barriersync
invocations by correct parties eventually return in wave $k$, and
thus all correct parties eventually invoke \elect in wave $k$.
By the L-Termination property, all \elect invocations by
correct parties eventually return in wave $k$.
Therefore, the Claim follows from the Wedge\&Exchange-Termination
property.

\end{proof}

\noindent The next corollary follows by inductively applying
Claim~\ref{claim:AS:view}.

\begin{corollary}
\label{col:viewcomplete}

For every $k \geq 1$, all parties eventually complete wave $k$ in
Algorithm~\ref{alg:AS}.

\end{corollary}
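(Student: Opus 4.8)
The plan is to prove Corollary~\ref{col:viewcomplete} by induction on the wave number $k$, with Claim~\ref{claim:AS:view} serving as the inductive engine. Claim~\ref{claim:AS:view} is conditional: it guarantees completion of wave $k$ only under the hypothesis that all correct parties \emph{start} wave $k$. The entire argument therefore reduces to discharging this hypothesis at each level of the induction. The key structural fact I would first extract from Algorithm~\ref{alg:AS} is that completing a wave forces a correct party into the next one. After the call to the wave procedure returns inside the while loop, a correct party performs only local, non-blocking steps --- an optional \textbf{decide}, the assignment $state \gets state'$, and the increment $wave \gets wave+1$ --- before looping back and re-invoking the wave procedure with the incremented wave number. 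Hence ``a correct party completes wave $k$'' implies ``that party starts wave $k+1$''.

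For the base case, I would observe directly from the code that all correct parties start wave $1$: upon \emph{SS-propose}$(id,S)$ every correct party sets $wave \gets 1$ and enters the while loop, thereby beginning wave $1$. Claim~\ref{claim:AS:view} then yields that all correct parties eventually complete wave $1$.

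For the inductive step, assume all correct parties eventually complete wave $k$. By the structural fact above, each such party subsequently starts wave $k+1$, so all correct parties eventually start wave $k+1$. Applying Claim~\ref{claim:AS:view} to wave $k+1$ gives that all correct parties eventually complete wave $k+1$, which closes the induction.

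I do not expect a genuine obstacle: once Claim~\ref{claim:AS:view} is in hand the argument is a routine induction, and the corollary's phrase ``all parties'' should be read as ``all correct parties'', consistent with the underlying claims. The one point deserving care --- and the only place where the proof could silently break --- is the bridging step asserting that completion of wave $k$ leads, with certainty and without blocking, to the start of wave $k+1$. This must be justified by directly inspecting the pseudocode: the \textbf{decide} statement and the local state update are non-blocking, so no correct party can become stuck between finishing one wave and entering the next.
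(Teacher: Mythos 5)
Your proof is correct and follows essentially the same route as the paper, which simply states that the corollary ``follows by inductively applying Claim~\ref{claim:AS:view}''. You have merely made explicit the two steps the paper leaves implicit --- the base case that all correct parties start wave $1$, and the bridging observation that the steps between waves are local and non-blocking, so completing wave $k$ entails starting wave $k+1$.
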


\noindent For the rest of the proof we say that an LBV instance is
\emph{completed} if at least $f+1$ \propose innovations by correct
parties previously returned.
We next bound the probability to choose a leader of a completed LBV
instance.

\begin{claim}
\label{claim:AS:leader}

For every wave $k \geq 1$, at least $f+1$ LBV instances with correct
leaders are completed before some correct party invokes \elect.

\end{claim}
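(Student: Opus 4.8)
The plan is to trace the causal chain backward from an \elect invocation in wave $k$ to the successful completion of $f+1$ distinct LBV instances, using the barrier's B-Coordination property as the pivot. Fix any correct party $p$ that invokes \elect in wave $k$; such a party exists by Corollary~\ref{col:viewcomplete}. By the code of the \textsc{wave} procedure, $p$ invokes \elect only after its own \barriersync has returned. Applying B-Coordination to that returning invocation yields a set $Q$ of $f+1$ correct parties, each of which invoked \barrierready before $p$'s \barriersync returned, and therefore before $p$ invokes \elect.

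Next I would show that for every $q \in Q$, the LBV instance $\lr{ID,q}$ that $q$ leads is already completed at the moment $q$ invokes \barrierready. By the code, $q$ invokes \barrierready only after receiving $n-f$ ``\textsc{engage-done}'' messages for $ID$, and a correct party sends such a message for $q$ exactly when its \propose invocation in the instance $\lr{ID,q}$ returns. A quorum-intersection count over these $n-f$ senders shows that at least $f+1$ of them are correct parties whose \propose returned, so by the definition of a completed instance used in this proof the instance $\lr{ID,q}$ is completed, and this completion necessarily precedes $q$'s \barrierready invocation. This is exactly the observation recorded alongside the algorithm, that a correct party invokes \barrierready only after the instance it leads has successfully completed.

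Combining the two steps finishes the argument: the members of $Q$ are $f+1$ distinct correct parties, so $\{\lr{ID,q} : q \in Q\}$ is a family of $f+1$ distinct LBV instances, each with a correct leader and each completed before the corresponding \barrierready invocation, which in turn precedes $p$'s \elect. Hence at least $f+1$ LBV instances with correct leaders are completed before $p$ invokes \elect, as required. The step I expect to be the crux is the quorum-intersection count in the middle paragraph: it is the only place where the concrete threshold $n-f$ (and thus the instantiated relation between $n$ and $f$) enters, and it must be phrased both to match the ``$f+1$ correct \propose invocations'' definition of completion and to pin every completion event strictly before the \elect rather than merely guaranteeing it eventually. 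Distinctness of the $f+1$ instances, by contrast, is immediate, since they are indexed by the distinct members of $Q$.
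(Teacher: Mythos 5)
Your proof is correct and follows essentially the same route as the paper's: from the returned \barriersync you apply B-Coordination to obtain $f+1$ correct parties that invoked \barrierready, then use the $n-f$ ``\textsc{engage-done}$''$ threshold and the counting of correct senders to conclude that each such party's own LBV instance was completed beforehand. Your write-up is merely more explicit about distinctness and event ordering (and adds an inessential appeal to Corollary~\ref{col:viewcomplete} for the existence of an electing party), but the decomposition and the key quorum count are identical to the paper's.
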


\begin{proof}

Consider some correct party that invokes \elect at wave
$k$.
By the code, its \barriersync invocation was previously
returned, and thus by the B-Coordination property, at least $f+1$
correct parties previously invoked \barrierready.
Therefore, at least $f+1$ correct parties
received $n-f$ $\textsc{engage-done}$ messages, at least
$f+1$ of which are from correct parties.
Since correct parties send $\textsc{engage-done}$ messages to party
$p_j$ only after their \propose invocation in the LBV instance in
which $p_j$ acts as leader returns, at least $f+1$ LBV
instances with correct leaders are completed before some correct party
invokes \elect.

\end{proof}

\begin{claim}
\label{AS:probability}

Consider a wave $k \geq 1$ that all parties start, the probability for
all correct parties to decide at wave $k$ is at least $\frac{f+1}{n}$.

\end{claim}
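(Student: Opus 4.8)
The plan is to combine the counting bound from Claim~\ref{claim:AS:leader} with the randomness guarantees of the leader-election abstraction, and then invoke the Completeness property of the LBV abstraction to convert ``a completed instance is elected'' into ``all correct parties decide.'' First I would fix the wave $k$ and appeal to Corollary~\ref{col:viewcomplete} to ensure that every correct party eventually completes wave $k$; in particular, every correct party eventually invokes \elect and, on the elected instance, \vc, and both return. Thus the only randomness left to control is \emph{which} instance gets elected.

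Next I would let $C$ denote the set of LBV instances that are \emph{completed} (at least $f+1$ \propose invocations by correct parties have returned) at the moment just before the first correct party invokes \elect in wave $k$. By Claim~\ref{claim:AS:leader} we have $|C| \geq f+1$, and, crucially, $C$ is fully determined before that first \elect invocation; for this bound only the count $f+1$ matters (the fact that these instances have correct leaders is needed later for fairness, not here). I would then invoke the election properties: by L-Agreement all correct parties elect the same party $p_\ell$, and by L-Validity $p_\ell$ is uniform, returned with probability $1/n$ for each party. The key independence step is L-Unpredictability: since $C$ is frozen before any correct party invokes \elect and the adversary cannot predict the outcome with probability better than $1/n$ at that point, the elected party can be treated as uniform and independent of $C$, so the event ``the instance led by $p_\ell$ lies in $C$'' has probability at least $|C|/n \geq (f+1)/n$, with no ability of the adversary to bias it.

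Finally, conditioned on the elected instance lying in $C$, I would apply the Completeness property of the LBV abstraction: since at least $f+1$ \propose invocations by correct parties returned in that instance, no \vc invocation by a correct party returns $\bot$. Hence every correct party's \vc on the elected instance returns some $v \neq \bot$, and by the code each correct party executes \textbf{decide} $\lr{id,value}$ in wave $k$ (the LBV Agreement property guarantees they all decide the same $v$). Combining, the probability that all correct parties decide at wave $k$ is at least the probability that the elected instance lies in $C$, i.e.\ at least $\frac{f+1}{n}$.

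I expect the main obstacle to be making the independence argument airtight: one must rule out that the adversary, after observing the election outcome, retroactively shrinks the set of completed instances or correlates its scheduling with $C$. This is exactly what L-Unpredictability buys, since it fixes $C$ before the election coin is effectively revealed, letting the elected party be treated as uniform and independent of $C$ and yielding the clean $|C|/n$ bound. The remaining pieces (completion of the wave, and completed $\Rightarrow$ non-$\bot$ return) are direct applications of Corollary~\ref{col:viewcomplete} and the Completeness property, respectively.
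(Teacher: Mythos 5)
Your proposal is correct and follows essentially the same route as the paper's proof: Corollary~\ref{col:viewcomplete} for wave completion, Claim~\ref{claim:AS:leader} for the $f+1$ completed instances, L-Agreement/L-Validity/L-Unpredictability for the $\frac{f+1}{n}$ election bound, and Completeness to turn an elected completed instance into a non-$\bot$ return and hence a decision. Your only addition is making explicit the independence argument (that the set $C$ of completed instances is fixed before the election coin is revealed, so the uniform bound applies), which the paper asserts implicitly via L-Unpredictability without spelling it out.
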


\begin{proof}

By Corollary~\ref{col:viewcomplete}, all correct parties invoke
\elect in view $k$ and, by the L-Agreement property, all
invocations return the same leader.
By Claim~\ref{claim:AS:leader}, at least $f+1$ LBV instances are
completed before some correct party invokes \elect at view $k$.
Therefore, by the L-Validity and L-Unpredictability properties, we get
that the probability to choose a leader of a completed LBV instance is
at least is $\frac{f+1}{n}$.
By the Completeness property of the LBV abstraction, if a completed
LBV instance is chosen, then no \vc invocation by correct party
returns $v= \bot$.
Therefore, by the code, all correct parties decide at the end of wave
$k$ with probability $\frac{f+1}{n}$.


\end{proof}

\begin{lemma}

Algorithm~\ref{alg:AS} satisfies Termination. 

\end{lemma}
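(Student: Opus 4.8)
The plan is to combine the two facts already established. Since the main loop of Algorithm~\ref{alg:AS} never halts, Corollary~\ref{col:viewcomplete} guarantees by induction that all correct parties start and complete every wave $k \geq 1$, so Claim~\ref{AS:probability} applies to each wave and lower-bounds by a fixed constant $p = \frac{f+1}{n} > 0$ the probability that all correct parties decide during that wave. Termination then reduces to the standard argument that a sequence of trials, each succeeding with probability at least $p$ regardless of the outcomes of the previous trials, succeeds eventually with probability $1$.

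For each wave $k$, let $G_k$ denote the ``good'' event in which the elected LBV instance is one of the $f+1$ completed instances guaranteed by Claim~\ref{claim:AS:leader}. As shown in the proof of Claim~\ref{AS:probability}, the Completeness property then forces every correct party to obtain $v \neq \bot$ from its \vc invocation in that wave, so $G_k$ implies that every correct party has decided by the end of wave $k$ (those who had not decided earlier decide now). The first step is therefore to show that $G_k$ occurs for some $k$ with probability $1$; this suffices, since once some $G_k$ occurs all correct parties have decided.

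The key point, and the step requiring the most care, is that the per-wave success probability bound holds \emph{conditionally} on the entire history preceding wave $k$: conditioned on any execution prefix in which all correct parties reach the start of wave $k$, the probability of $G_k$ is still at least $p$. This is because, as in Claim~\ref{AS:probability}, the bound stems solely from the \elect invocation of wave $k$, whose outcome is uniform over the $n$ parties and, by the L-Validity and L-Unpredictability properties, independent of everything the adaptive adversary could have influenced up to that point, including which $f+1$ instances completed and the outcomes of all earlier waves. Hence $\Pr[\neg G_1 \cap \cdots \cap \neg G_k] \leq (1-p)^k$, and since $p > 0$ is constant this tends to $0$ as $k \to \infty$. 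Therefore the probability that no $G_k$ ever occurs is $0$, and so all correct parties decide with probability $1$.
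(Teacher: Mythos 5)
Your proof is correct and follows essentially the same route as the paper's: combine Corollary~\ref{col:viewcomplete} (all correct parties complete every wave) with the per-wave success probability of Claim~\ref{AS:probability} and conclude decision with probability $1$ via a geometric argument. The only difference is that you spell out the step the paper leaves implicit---that the $\frac{f+1}{n}$ bound holds conditionally on the execution prefix, justified by L-Validity and L-Unpredictability against an adaptive adversary---which is a welcome piece of rigor but not a different approach.
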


\begin{proof}

By Corollary~\ref{col:viewcomplete}, correct parties are never
stuck indefinitely in any view, and thus all honest parties eventually start
all views.
Therefore, by Claim~\ref{AS:probability}, all honest parties
eventually decide with probability 1.
Moreover, the expected number of views after which all correct party decide
is $\frac{n}{f+1}$.

\end{proof}

\begin{lemma}

Algorithm~\ref{alg:AS} satisfies Fairness. 

\end{lemma}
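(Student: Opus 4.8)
The plan is to prove the two clauses separately: I will establish the ``moreover'' clause by reasoning directly about synchronous periods, and the $1/2$ bound by a wave-by-wave comparison between the probability of deciding a value proposed by a correct party and the probability of deciding a value proposed by a byzantine party. Throughout I will use Observation~\ref{obs:compose} (the chosen instances form a properly composed execution) so that I may invoke the LBV safety properties.

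For the synchronous clause, note that during a synchronous period no correct party ever needs the timeout path: in each wave a correct party invokes \vc only after \barriersync and \elect return, so during the leader-based phase no correct party invokes \vc, and by Engage-Termination every \propose in every instance with a correct leader returns, within bounded time. Hence all $n-f$ instances with correct leaders are completed before any correct party invokes \elect (strengthening Claim~\ref{claim:AS:leader}). By L-Validity each party is elected with probability exactly $1/n$ and by L-Agreement all correct parties agree on that leader; if the elected leader is correct its instance is completed, so by Completeness every \vc returns a non-$\bot$ value, and by Observation~\ref{obs:compose} this value is the elected leader's own (externally valid) proposal. Thus the decided value is a given correct party's proposal precisely when that party is elected, which happens with probability $1/n$, equally for all correct parties.

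For the $1/2$ bound I will use a per-wave ratio argument that is robust to an adaptive adversary. Fix a wave $k$ and condition on the event $R_k$ that no correct party decided in waves $1,\ldots,k-1$. By Claim~\ref{claim:AS:leader} at least $f+1$ instances with correct leaders are completed before \elect is invoked, and by Completeness each of them forces every \vc to return a non-$\bot$ value; since by L-Validity and L-Unpredictability the election is uniform over the $n$ instances and independent of the adversary, the probability that wave $k$ decides a value proposed by a correct party is at least $(f+1)/n$. Conversely, a byzantine value can be decided in wave $k$ only if one of the at most $f$ byzantine-led instances is elected, so the conditional probability of deciding a byzantine value is at most $f/n$. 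Therefore $\Pr[\text{decide correct in }k \mid R_k] \ge \Pr[\text{decide byzantine in }k \mid R_k]$ for every $k$. Weighting by $\Pr[R_k]\ge 0$ and summing over all waves gives $\Pr[\text{decide correct}] \ge \Pr[\text{decide byzantine}]$, and since Termination guarantees that some value is decided with probability $1$, these two probabilities sum to $1$ and the former is at least $1/2$.

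The main obstacle is justifying the step that electing a completed correct instance decides a value \emph{proposed by a correct party}: in principle a byzantine value could be laundered into a closing state and then forced upon the correct leaders of a later wave. I will close this gap using Observation~\ref{obs:compose}, which ensures that only the state of the single elected instance is carried forward: whenever a wave is undecided, the elected instance's \vc returned $\bot$, so by the safety of the encapsulated view-change together with Completeness no value is locked into the propagated closing state, and every correct leader of the next wave is again free to propose its own externally valid value. Consequently, in any undecided wave the value attached to a completed correct instance is genuinely a correct party's proposal, which is exactly the property the ratio argument consumes, and which also underlies the synchronous clause above.
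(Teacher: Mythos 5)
Your overall architecture follows the paper's route: the paper's proof is precisely the combination of Claim~\ref{claim:AS:leader} (at least $f+1$ completed correct-led instances per wave), Completeness (an elected completed instance forces a decision), and uniform unpredictable election, yielding the $\ge 1/2$ bound against at most $f$ byzantine-led completed instances, plus the informal synchrony argument for the $1/n$ clause. However, your final paragraph---which you correctly identify as the load-bearing step for waves $k>1$---contains an unsound inference: from ``the elected instance's \vc returned $\bot$'' you conclude ``no value is locked into the propagated closing state.'' This converse of Completeness is neither part of the LBV specification nor true of its instantiations. Completeness only says that $f+1$ correct \propose returns force non-$\bot$ \vc returns; the LBV Agreement property in fact forces the opposite of your claim: if even one correct party's \propose returned $v$ in the elected instance while fewer than $f+1$ did, \vc may legitimately return $\lr{s,\bot}$, and then $s$ \emph{must} lock $v$, since otherwise the next wave could decide $v'\neq v$ and violate Agreement. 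Concretely, in the HotStuff instantiation a closing state carries a lock whenever a leader drives its instance partway, so a byzantine value can indeed be laundered through an elected-but-undecided byzantine-led instance, and your per-wave bound $\Pr[\text{decide correct in }k \mid R_k]\ge (f+1)/n$ fails under your conditioning $R_k$ (merely ``no decision yet'').

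The fix keeps your race structure but moves laundering into the bad branch instead of arguing it away: condition on $R'_k$, the event that waves $1,\dots,k-1$ produced no decision \emph{and never elected a byzantine-led instance}. Under $R'_k$, an easy induction along the chain of elected instances (which, by Observation~\ref{obs:compose}, is the only state that propagates) shows every value a correct leader proposes or is locked to in wave $k$ is correct-proposed, so your comparison $(f+1)/n$ versus $f/n$ goes through; a byzantine-proposed decision requires the absorbing event ``byzantine-led election before any decision,'' and the race bound $f/(2f+1)<1/2$ gives Fairness. Separately, your synchrony clause over-claims: \barriersync may return once the first $f+1$ correct-led instances complete, so some correct-led instances can still be un-completed when \elect is invoked even in synchrony; the paper asserts only that in synchrony all correct-led instances complete with equal probability, which is the (weaker, admittedly informal) premise the $1/n$ clause actually needs.
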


\begin{proof}

By the Completeness property of the LBV abstraction, if a completed
instance is chosen, than all correct parties decide at the end of
the wave. 
By Claim~\ref{claim:AS:leader}, for every wave $k \geq 1$, at least
$f+1$ LBV instances with correct leaders are completed before a
unique instance is chosen in the wave.
Therefore, even if all LBV instances with faulty parties have been
completed as well, the probability to decide on value proposed by a
correct party is at least $1/2$. Moreover, during synchronous periods,
all instances with correct parties complete with equal probability,
and thus by the L-validity, all correct parties have an equal
probability for their values to be chosen.

\end{proof}

\subsection{Asynchronous fair state machine replication}
\label{sub:SMR}

The pseudocode for the asynchronous fair state machine replication
appears in Algorithm~\ref{alg:AS-SMR}.
The parameter $S$, passed to the \emph{SMR-propose} invocation, is a
vector consisting of an initial state for each slot. We use an instance of
the asynchronous single-shot agreement to agree on the value of each slot, 
and thus the SMR algorithm inherits the Integrity, Fairness, Validity
and Agreement properties from the single-shot one.
In order to satisfy FIFO, parties do not advance to the next
slot until they learn the decision of the current one, and in order to
satisfy Halting they de-allocate all resources associated with the
current slot and abandon the slot's single-shot algorithm once
they move.
Note that abandoning a single-shot algorithm might violate
its Termination because it relies on the participation of all
correct parties.
Therefore, to satisfy the Termination of the SMR, 
we use a forwarding mechanism to reliably broadcast the decision
value before abandoning the current slot and moving to the next one.

Since ACE is model agnostic, parties do not explicitly use
threshold signatures and decision proofs in the
forwarding mechanism.
Instead, in every slot, correct parties waits until they either
decide on a value $v$ (via the slot's single-shot agreement) or
receive $f+1$ ``\textsc{decide}'' messages from parties claiming they
have decided on a value $v$.
In the second case, the receiving party knows that at least one
correct party decided $v$ even in the byzantine failure model.
Then, to make sure that all correct parties eventually finish waiting
even though some might have moved to the next slot already, parties use
the barrier abstraction that provides the required guarantee with its
B-Coordination property.
Note that although the above forwarding mechanism works for both
byzantine and crash failures models, in the latter case the forwarding
mechanism can be simplified:
a party only needs to echo its decision value before it moves to
the next slot.
So in this case, the gray lines in the pseudocode can be dropped.  

\begin{algorithm}
\caption{Asynchronous fair state machine replication.} 
\begin{algorithmic}[1]
\footnotesize

\Upon{\emph{SMR-propose(S)}}

	\State $slot \gets 0$
	\For{\textbf{every} $s \in \mathbb{N}$}
	
		\CSTATE $M[s] \gets \{\}$
		\State $V[s] \gets \bot$; $D[s] \gets \emph{false}$

	\EndFor
	
	\While{true}
	
		\State $slot \gets slot+1$
		\State $ \emph{SS-propose}(slot,S[slot])$
		\State \textbf{wait} until $D[slot] = \emph{true}$ 
		\State send ``slot, $\textsc{decide}, value$'' to all parties
		\CSTATE \barrierready$_{slot}()$
		\CSTATE \barriersync$_{slot}()$
		\State \textbf{free} all resources associated with $slot$
		\State \textbf{output} $\lr{slot,V[slot]}$

	\EndWhile

\EndUpon

\Statex

\Upon{decide $\lr{slot,v}$}

	\State $V[slot] \gets v$; $D[slot] \gets \emph{true}$

\EndUpon

\Statex

\Receiving{``slot, $\textsc{decide}, v$'' from party $p_j$}

	\CSTATE $M[slot] \gets M[slot]\cup \{\lr{p_j,v}\}$
	\CIF{$\exists v$ s.t.\ $|\{p | \lr{p,v} \in M[slot] \}| = f+1$}
	
		\State $V[slot] \gets v$; $D[slot] \gets \emph{true}$ 
	
	\EndIf

\EndReceiving

\end{algorithmic}
\label{alg:AS-SMR}
\end{algorithm}

\subsection{Partially Synchronous view-by-view Agreement}
\label{sec:recAlg}

For completeness, we show how to use the LBV
abstraction to reconstruct a variant of the leader-based view-by-view
partially synchronous base Agreement algorithm that the LBV
abstraction is instantiated with.
To this end, we assume the base algorithm provides two methods:
\emph{getLeader} and \emph{getTimeout}.
These methods should implement the logic used by the base algorithm
to map designated leaders to views and set their timeouts,
respectively. 
In particular, \emph{getLeader(v)} gets a view $v$ and returns a
party $p_i \in \{ p_1\dots,p_n\}$, and \emph{getTimeout(v, S)} gets a view
$v$ together with a state $S$ and returns a timeout. 

The pseudocode appears in Algorithm~\ref{alg:reconSS}.
The protocol proceeds in views.
Timeouts are used in order to demote a leader who was unable to drive
progress.
At the beginning of every view, parties first get the leader and the
timeout of the current view, and then invoke \propose in the leaders
LBV instance and a timer to monitor the leaders progress.
If the \propose invocation returns $v$ before the timer expires, then
$v$ is decided.
In any case, whether the \propose invocation returns or
the timer expires, a \vc is invoked in order to update the state and
safely proceed to the next view.

Note that the algorithm forms a properly composed execution of the LBV
instances, and thus Validity and Agreement are trivially satisfied by
the Safety properties required by the LBV abstraction.
As any protocol in the partially synchronous model, the termination
of the algorithm requires a long enough synchronous period in which all
correct parties execute the same view. 


\begin{algorithm}
\caption{Reconstruction of base partially synchronous single-shot
agreement:
protocol for party $p_i$.} 
\begin{algorithmic}[1]
\footnotesize

\Upon{\emph{ES-propose(id,S)}}


\State $state \gets S$; $view \gets 1$
%
%
%

\While {true}
	\State $\emph{leader} \gets \Call{getLeader}{view}$
	\State $\emph{timeout} \gets \Call{getTimeout}{view, state}$
	\State invoke \propose$_{\lr{view,\emph{leader}}}(state)$
	\State invoke a timer to expire in $timeout$
 	\State \textbf{wait} for timer to expire or \propose to return
	\If{\propose returned $v$}
	
		\State \textbf{decide} $v$
	
	\EndIf

	\State $\lr{state,*} \gets $ \vc$_{\lr{view,\emph{leader}}}()$

	\State $view \gets view+1$ 
\EndWhile

\EndUpon

\Statex

%
%

\end{algorithmic}
\label{alg:reconSS}
\end{algorithm}

\section{ACE Instantiation}
\label{sec:evaluation}

There are many possible ways to instantiate the ACE framework.
We choose to evaluate ACE in the byzantine
failure model with $n=3f+1$ parties and a computationally bounded
adversary due to the attention it gets in
the Blockchain use-case. 
For the LBV abstraction, we implement a variant of
HotStuff~\cite{hotstuff}.
For the leader-election we implement the
protocol in~\cite{vaba, Cachin2000RandomOI}, and for the Barrier we
give an implementation that operates in the same model.
All protocols use a BLS threshold signatures
schema~\cite{boneh2001short} that requires a setup, which can be done
with the help of a trusted dealer or by using a protocol for an
asynchronous distributed key generation~\cite{kokorisbootstrapping}.
 


Our evaluation compares the performance of ACE's SMR
instantiated with HotStuff, we refer to as \emph{ACE HotStuff}, with
the base HotStuff SMR implementation.
To compare apples to apples, the base HotStuff and ACE HotStuff share
as much code as possible.
We present raw performance comparisons during synchronous periods,
and demonstrate the performance during asynchrony and under
adversarial attacks.

We proceed to describe our implementation of the ACE
building blocks in Section~\ref{sub:imp}, and in
Section~\ref{sub:ev}, we describe the environmental setup and
performance measurements.

\subsection{Implementation}
\label{sub:imp}

We implemented all algorithms in C++, and made use of a
BLS threshold signature~\cite{boneh2001short} implementation
provided in~\cite{concordOpenSource}.
Communication is done over TCP to provide reliable links.
We next describe our LBV, barrier, and leader election
implementations.
The communication complexity of a single LBV is linear and that of
the barrier and leader-election is quadratic,
leading to an expected total quadratic communication, for each
slot.


\paragraph{LBV.}
The instantiation of the LBV abstraction is the four-step view by
view algorithm of HotStuff~\cite{hotstuff}.
In the leader-based phase of each view in HotStuff, a leader drives a
decision in four steps of communication s.t.\ in every step the
leader sends a signed message (with the proposed value) to all
parties, which in turn verify, sign, and send it back to the leader.
To ensure linear communication, the leader utilizes threshold
signatures~\cite{boneh2001short} to provide concise proofs of a
quorum of signatures which are used for verification.
If parties timeout before the leader completes all four steps, they
start the view-change phase in which they send the closing state,
that consists of the messages they received in this view to the
leader of the next one.

To encapsulate the HotStuff protocol in the LBV abstraction we did
the following:
when a party invokes \propose, it begins participating in the
leader's four-step protocol as described above, and whenever it gets
the last step's message, it returns the value therein.
To verify the safety of the first step message, it
uses the information in the \emph{state} parameter passed to \propose. 
Note that if \vc is invoked before \propose returns, then \propose
never returns ($\mathsf{wedge}$\xspace).
In this case, parties behaves as if their timeouts expired in the
original HotStuff with the following change: instead of sending the
closing state only to the next leader, parties send it to all parties
($\mathsf{exchange}$\xspace)
since all parties act as leaders in the next wave.
When a party gets $n-f$ such closing states, it updates the 
\emph{state} parameter accordingly and outputs it.
In addition, if at least one received closing states contains a valid
four step's message, then the party also outputs the value therein.
Otherwise, it outputs $\bot$.
Note that we do not describe the specific logic of the four-step
protocol and the structure of the state -- an interested reader is
referred for more details to~\cite{hotstuff}.

Since the implementation keeps the safety logic of HotStuff, the
\emph{Validity} and \emph{Agreement} of LBV are 
satisfied. 
Since  \propose  implements the HotStuff leader's phase,
then if the leader is correct and no correct party invoke \vc, then
eventually all parties get the leaders four step's message and
return, thereby satisfying \emph{\propose-Termination}.
The \emph{\vc-Termination} is satisfied since it returns after
receiving $n-f$ closing states.
Af for \emph{Completeness}, if an \propose invocation returns a
value, than the invoking party gets the four step's message.
Therefore, if $f+1$ \propose invocation by correct parties return,
then all correct parties gets at least one closing state with the
four step's message during \vc, and return a value $v \neq
\bot$.

\paragraph{Barrier.}
When a party invokes \barrierready, it broadcasts a message with
its signature share for this barrier identification. 
When a party invokes \barriersync, it waits for the first
of the following two events.
If it receives $2f+1$ valid shares, it combines the shares into a
threshold signature, sends it to all other parties, and
returns.
If it receives a correct threshold signature, it forwards it to all
other parties and returns.

Since a \barriersync invocation does not return before receiving
$2f+1$ shares, then at least $f+1$ correct parties previously invoked
\barrierready, thus satisfying \emph{B-Coordination}. 
\emph{B-Termination} is satisfied since if all correct parties send
shares to all other parties, then every correct party
receives $2f+1$ shares, and can generate a valid threshold signature.
Due to the forwarding of threshold signatures, if one
correct party gets a threshold signature and returns, then all correct
parities eventually get it and return, satisfying
\emph{B-Agreement}.


\paragraph{Leader-election.}
The leader election is similar to that of \cite{vaba} and
\cite{Cachin2000RandomOI}.
When a party invokes \elect, it signs the instance's
identification and broadcasts its share.
When a correct party collects $f+1$ valid shares, it combines them
to a threshold signature, hashes the signature to get a pseudo
random value, and returns the value modulo $n$ to get a random leader.

As the threshold signature is generated from $f+1$ valid shares, all
correct parties eventually generate it provided that at least $f+1$
correct parties invoke \elect, satisfying \emph{L-Termination}.
Due to the nature of threshold cryptography, the generated
threshold signature is the same for all correct
parties~\cite{boneh2001short}, and since the hash and modulo functions
are deterministic, all parties agree on the electing leader,
satisfying \emph{L-Agreement}.
For formal analysis of \emph{L-Validity} and \emph{L-Unpredictability}
please refer to~\cite{Cachin2000RandomOI, cachin2001secure, vaba}.


\subsection{Evaluation}
\label{sub:ev}

We compare the SMR of ACE HotStuff with that of base
HotStuff. 
In Section~\ref{subsub:setup} we present the tests' setup.
Then, in Section~\ref{subsub:overhead}, we measure ACE's overhead
during failure-free synchronous periods, and in
Section~\ref{subsub:benefits} we demonstrate ACE's superiority
during asynchrony and attacks.

\subsubsection{Setup}
\label{subsub:setup}

We conducted our experiments using $\mathsf{c5d.4xlarge}$\xspace
instances on AWS EC2 machines in the same data center. 
We used between 1 and 16 virtual machines, each with 4 replicas.
The duration of every test was $60$ seconds, and every test was
repeated 10 times.
The size of the proposed values is $10000$ bytes. The latency is
measured starting from when a new slot has begun until a decision is made. The throughput is measured in one of two ways. In tests
where we altered the number of replicas, the throughout is the total number of
bytes committed, divided by the length of the test. In tests where we show the throughout
as a function of time, we aggregate the number of committed bytes in
$1$ second intervals.
We did not throttle the bandwidth in any run, rather we altered the transmission delays between
the machines, using $\mathsf{NetEm}$\xspace~\cite{netem}.

\subsubsection{ACE's overhead}
\label{subsub:overhead}

The first set of tests compare ACE HotStuff performance with that
of base HotStuff under optimistic, synchronous, faultless conditions. 
Figure \ref{fig:overhead} depicts the latency and throughput.
The delay on the links was measured to be under $1ms$.
The latency increases with the growth in the number
of replicas since
each replica must handle an equal growth in the number of
messages. Furthermore, as ACE HotStuff has a larger overhead than
base HotStuff, the latency grows faster.

Figure \ref{fig:overhead-delyed} shows the latency and throughput
with different delays added to the links. 
The latency of ACE HotStuff is twice that of base HotStuff.
This is expected, as ACE is expected
to execute 1.5 waves per slot, leading to 1.5x the latency.
Add on the additional barrier, leader election abstraction and we
arrive at 2x reduction in performance.

These tests show that the cost of using ACE is
about 2x reduction in performance in the optimistic case. 
In the next tests we argue this cost is sometimes worth paying,
as liveness of partially synchronous algorithms can be easily
affected.

\begin{figure}[th]
    \centering
    \begin{subfigure}[b]{0.23\textwidth}
        \includegraphics[width=\textwidth]{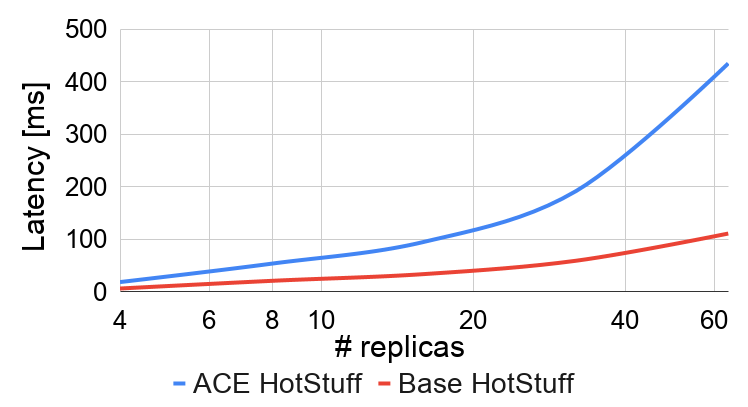}
          \caption{Latency}
        \label{fig:nodelaylat}
    \end{subfigure}
    \hfill
    \begin{subfigure}[b]{0.23\textwidth}
        \includegraphics[width=\textwidth]{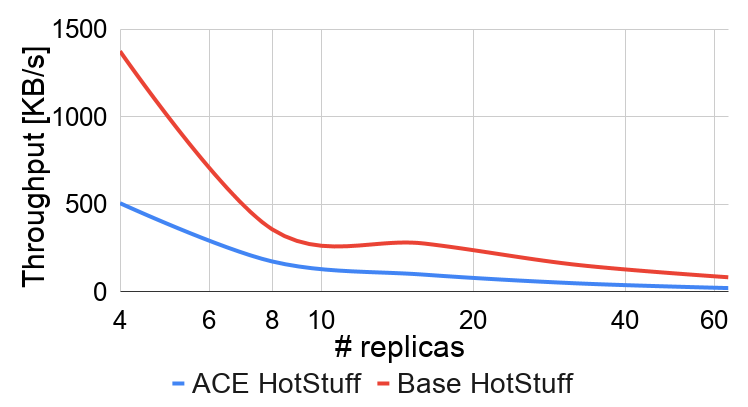}
          \caption{Throughput}
        \label{fig:nodelaythpt}
    \end{subfigure}
       \vspace*{-3mm}
        \caption{Optimistic case with no network delay.}
       \label{fig:overhead}
\end{figure}

\begin{figure}[th]
    \centering
    \begin{subfigure}[b]{0.23\textwidth}
        \centering
        \includegraphics[width=\textwidth]{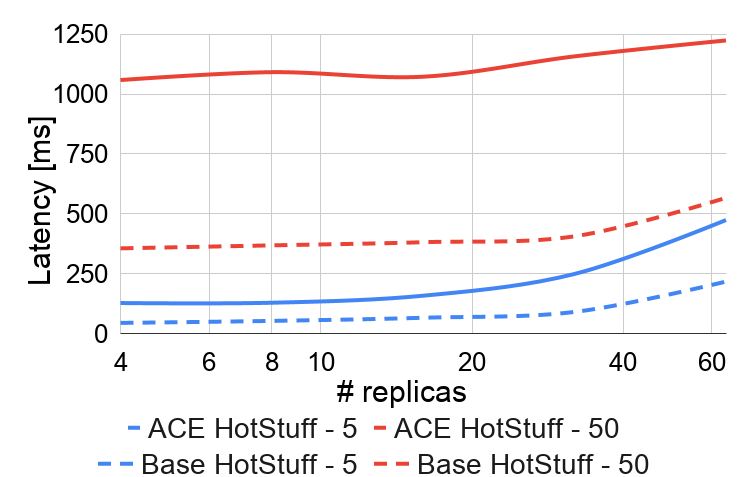}
          \caption{Latency}
        \label{fig:withdelaylat}
    \end{subfigure}
    \hfill
    \begin{subfigure}[b]{0.23\textwidth}
        \centering
        \includegraphics[width=\textwidth]{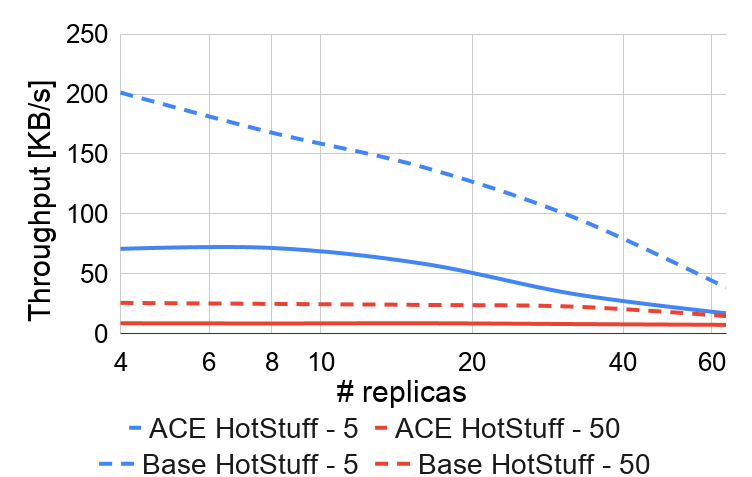}
          \caption{Throughput}
        \label{fig:withdelaythpt}
    \end{subfigure}
		 \vspace*{-3mm}
		\caption{Optimistic case under different network delays.}
       \label{fig:overhead-delyed}
\end{figure}

\subsubsection{ACE's superiority}
\label{subsub:benefits}

From here on we choose a configuration of 32 replicas and set the
transmission delay to be $5$ms unless specified otherwise.
The second set of tests compare ACE HotStuff and base HotStuff in
adverse conditions concerning message delays. 
These tests manipulate two factors, the transmission delays
(controlled via NetEm~\cite{netem}), and the view timeout
strategy. 

\paragraph{Periods of asynchrony.}

The first test sets base HotStuff view timers to a fixed
constant of $100$ms, the time needed for a commit assuming a $5$ms
transmission delay.
The test measures the performance drop during a short period in which
transmission delays are increased, simulating asynchrony.
%
%
For the first third of the test the network delay is $5$ms, for
the next third the delay is $10$ms, and finally the delay returns to
$5$ms. 

Figure \ref{fig:variabledelaythput} compares the throughput of ACE
HotStuff and base HotStuff. 
While the network delay is $5$ms, base HotStuff
outperforms ACE HotStuff. 
However, once the network delay begins to fluctuate, the throughput of
base HotStuff goes to $0$ since no leader has enough time to drive
progress.
ACE HotStuff only sees a drop in throughput proportional to
the delay, meaning that it continue to progress at network speed.

\sidecaptionvpos{figure}{c}
\begin{SCfigure}
      \includegraphics[width=0.28\textwidth,]{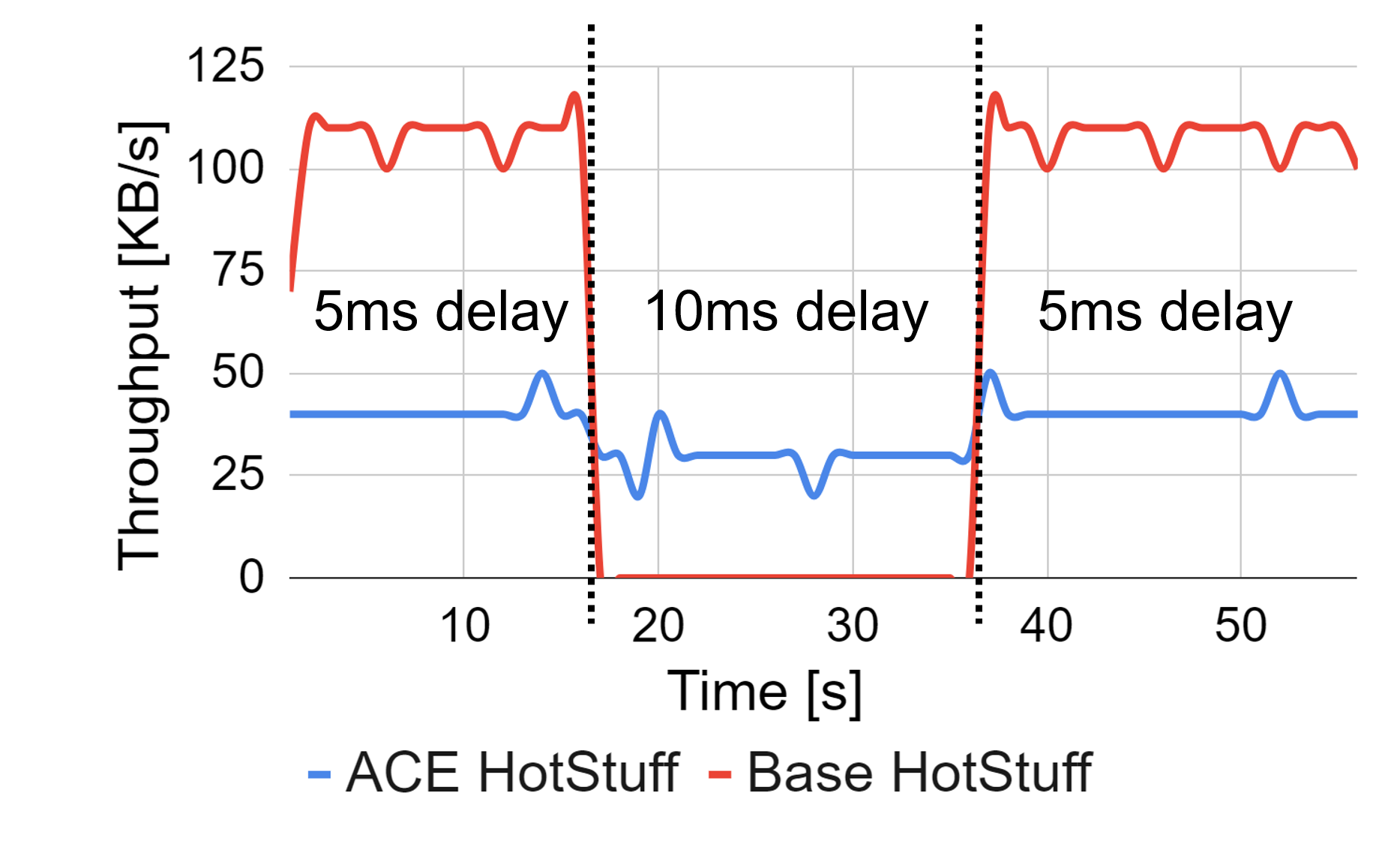}
    \caption{Throughput with a fluctuating
    transmission delay.}
    \label{fig:variabledelaythput}
\end{SCfigure}

\paragraph{Weak adaptive asynchrony attack.}

Note that since the views in base HotStuff are leader-based,
byzantine parties (or any other adversarial entity) can achieve
the same ``asynchronous'' effect presented above by only slowing down
the leaders.
In the next test we demonstrate the above using a
\emph{distributed denial of service (DDoS)} attack, in which leaders
are flooded with superfluous requests in an attempt to overload them
and delay their progress in the leader-based phase.


Figure~\ref{fig:DosAttackThpt} compares the throughput of ACE
HotStuff and base HotStuff, where the attack starts at the halfway
mark of the test.
The byzantine parties coordinate their
attack by adaptively choosing a single correct party and
flooding it with superfluous requests. In base HotStuff, byzantine
parties target correct leaders (byzantine leaders are making
progress).
In ACE HotStuff, there is no designated leader, therefore the
byzantine parties choose an arbitrary correct party to attack.
Our logs show that in base HotStuff progress is mainly
made in views where byzantines parties are leaders. If they would
not drive progress, the throughput would drop near $0$.

\sidecaptionvpos{figure}{c}
\begin{SCfigure}
      \includegraphics[width=0.28\textwidth]{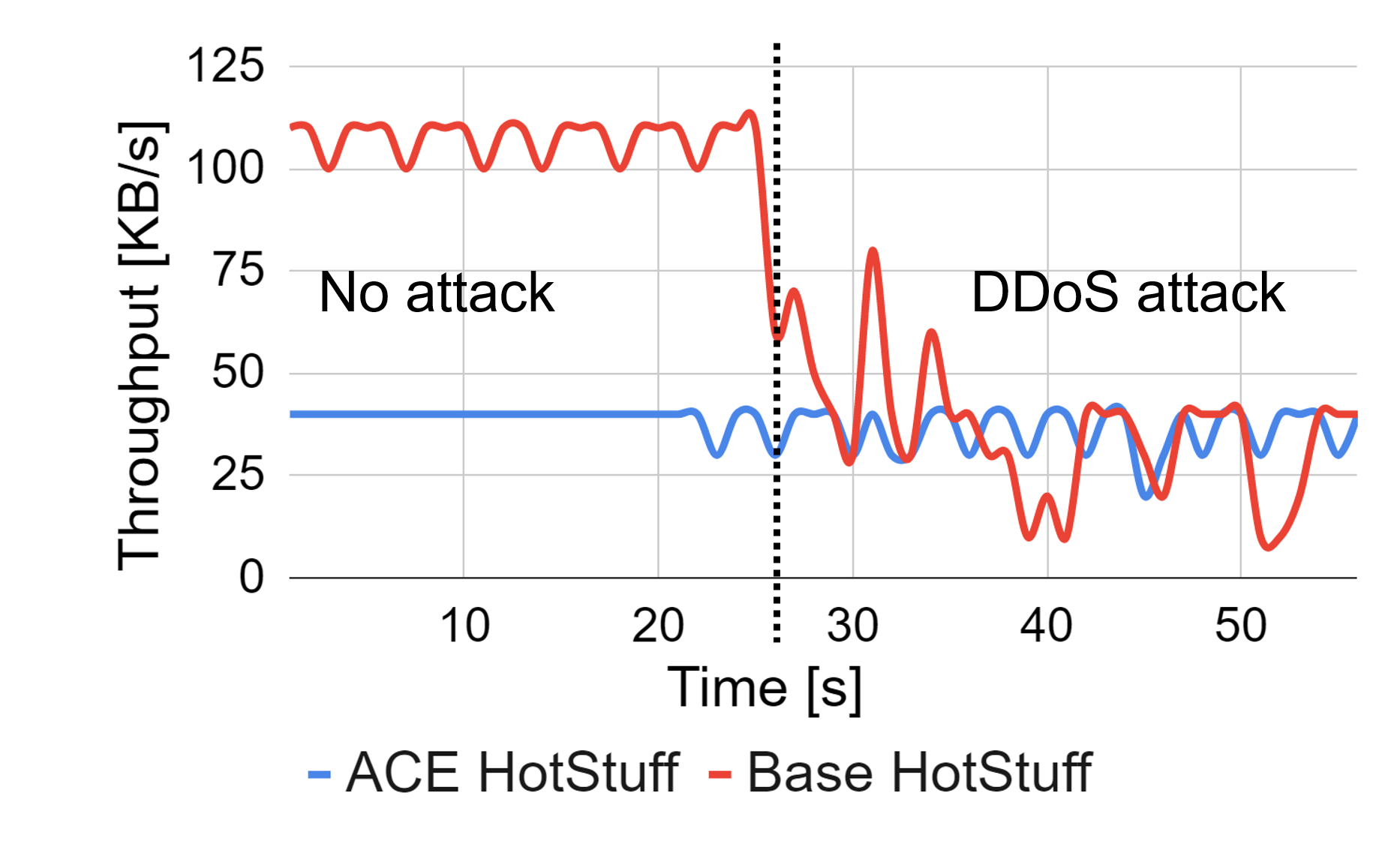}
    \caption{Throughput under DDoS attack.}
    \label{fig:DosAttackThpt}
\end{SCfigure}

\paragraph{Long conservative timeouts.}

The previous two scenarios operated base HotStuff with a fixed
aggressive view timer, which was based on the expected network delay.
This caused premature timer expiration during periods of increased
delays (due to asynchrony or attacks).
One might think that a possible solution can be to set a very long
timeouts that will never expire, thus letting the base HotStuff
protocol progress in network speed.
However, the downside of conservative timers is that
byzantine parties can perform a \emph{silent attack} on the
protocol's progress by not driving views when they are leaders,
forcing all parties to wait for the long timeouts to expire.

The next test evaluates base HotStuff with a conservative view timer
of $1$ second, fixed to be much higher than expected needed to commit
a view, under the silent attack starting at the half way mark.
Figure \ref{fig:StaticDelayAttackThpt} presents the results.
Before the attack, base HotStuff indeed progresses in network
speed, but during the attack, the throughput drops
significantly since a few consecutive byzantine leader might stall
progress for seconds.
In ACE HotStuff we see a much smaller drop, but more fluctuation.
This is due to the fact that byzantine leaders do not drive progress
in their LBV instances, and thus the expected number of waves until a
decision is now higher.


\sidecaptionvpos{figure}{c}
\begin{SCfigure}
      \includegraphics[width=0.28\textwidth]{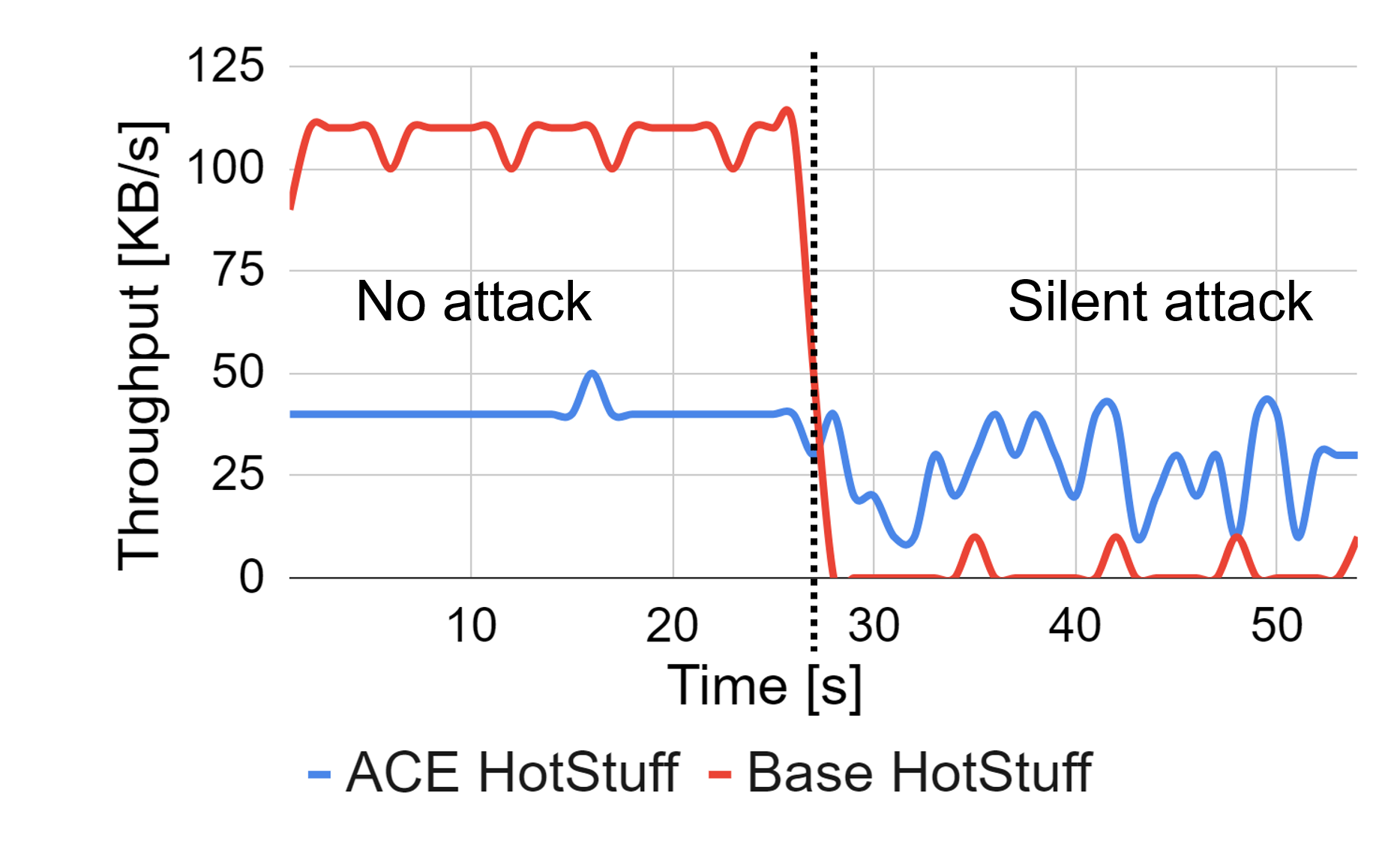}
    \caption{Throughput with conservative timeouts under
    byzantine silence attack.}
    \label{fig:StaticDelayAttackThpt}
\end{SCfigure}

\paragraph{Adjusting timeouts.}
As the scenarios above demonstrate, neither being too aggressive nor
being too conservative works well for base HotStuff during asynchrony
or attacks.
Therefore, in practice, when HotStuff is deployed it typically
adjusts timers during execution according to progress or lack of it.
The most common method (used also by PBFT~\cite{pbft} and
SBFT~\cite{sbft}) is to increase
timeouts whenever timers expires too early, and decrease them
whenever progress is made in order to try to learn the network delay
and adapt to it's dynamic changes.
%
%
To test this method, we implement an adaptive version,
starting with a delay of $t$. If a timeout is reached in a view
before a decision is made we set the next view's timeout to $1.25t$.
Otherwise, the next view's timeout is set to $0.8t$.

We evaluate this method against the following attack that combines
insights from the previous ones.
The results are shown in Figure \ref{fig:DynamicDelayAttackThpt}.
In the second half of the experiment, byzantine parties perform a
DDoS attack on correct leaders, causing the view timers to increase,
and then perform the silence attack (in views they
act as leaders) to stall progress as much as possible.
%
%
As expected, base HotStuff throughput drops to almost zero, whereas
ACE HotStuff continues driving decisions.
Same as in the previous test, ACE HotStuff suffers from fluctuation
due to the probability to choose a byzantine leader that did not made
progress in its LBV instance.
Another interesting phenomenon is the x2 performance drop of base
HotStuff before the attack begins compared to previous tests. 
This is due to the timeout adjustment mechanism, which reduces the
timers after every successful view, resulting in a too short timeout
in every second view.


\sidecaptionvpos{figure}{c}
\begin{SCfigure}
      \includegraphics[width=0.28\textwidth]{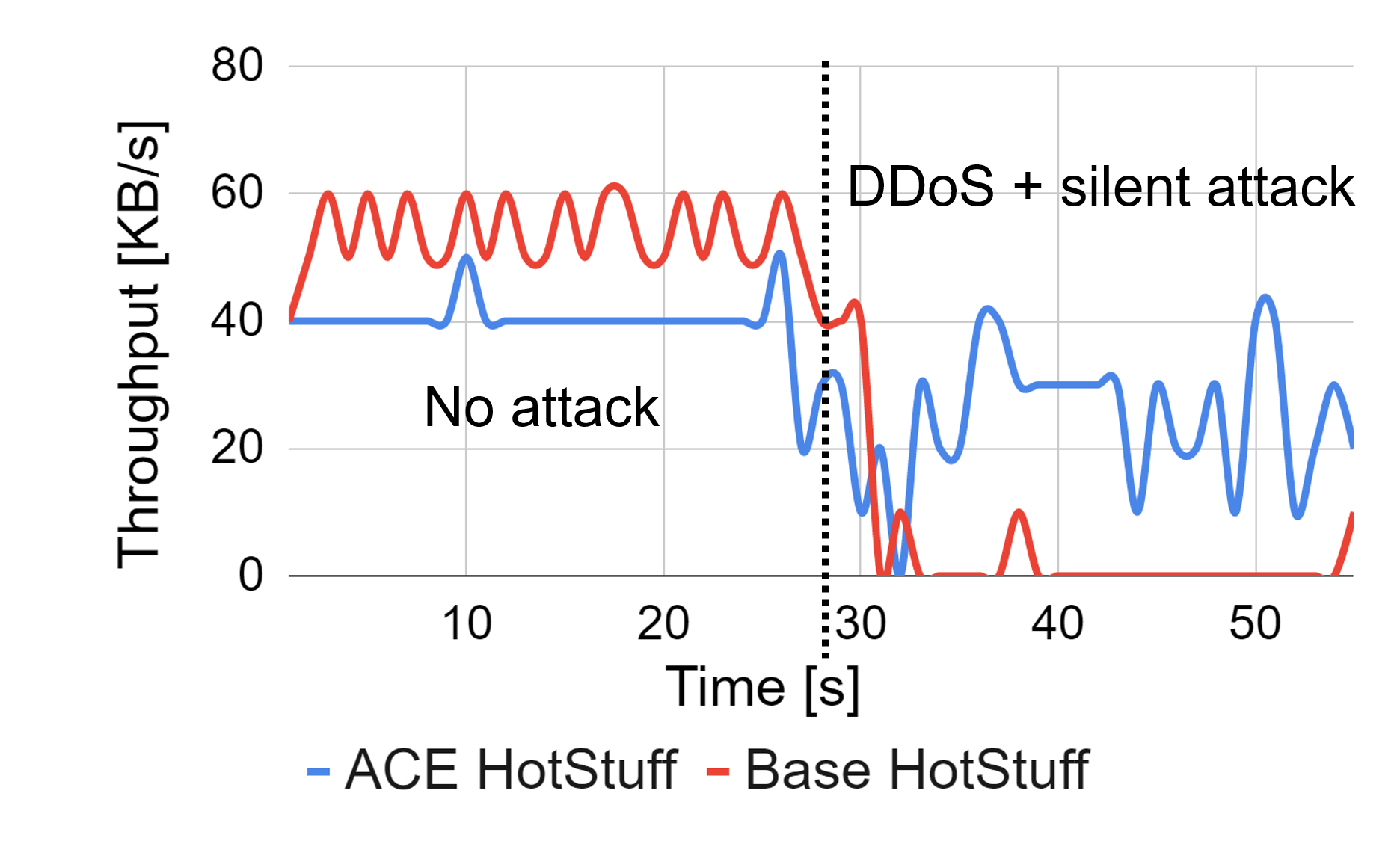}
    \caption{Throughput with adjusting timeouts under a combination of
    DDoS and silence attacks.}
    \label{fig:DynamicDelayAttackThpt}
\end{SCfigure}

While the timer adjustment algorithm can be further enhanced, it is
an arms race against the adversary -- for each method, there is an
adversarial response.
In addition, although this evaluation is focused on HotStuff, the
only ingredient of the algorithm that is under attack is the timeout,
hence the evaluation exemplifies the weakness of all leader-based
view by view algorithms.
Therefore, our evaluation suggests that the overhead of ACE in the
optimistic case is worth paying when high availability is desired
under all circumstances.

\section{Related work}
\label{sec:related-work}
The agreement problem was first introduced by Pease et
al.~\cite{pease1980reaching} almost 40 years ago, and has received an
enormous amount of attention since then~\cite{canetti1993fast, abd2005fault,
yin2003separating, pbft, zyzzyva,
amir2006scaling, martin2006fast, li2007beyond,
amir2007customizable, clement2009making, amir2011prime,
miller2016honey, liu2016xft, duan2014bchain, garay1998fully}.
One of the most important results is the
FLP~\cite{fischer1982impossibility} impossibility, proving that
deterministic solutions in the asynchronous communication models are
impossible.
Below we describe work that was done to circumvent the FLP
impossibility, present two related frameworks that were
previously proposed for the agreement problem, compare our SMR
definition to other systems in the literature, and discuss
alternative fairness definitions.

\paragraph{Agreement in the partial synchrony model.}
A practical approach to circumvent the FLP 
impossibility is to consider the partial synchrony
communication model~\cite{oki1988viewstamped,
ongaro2014search,sbft, hotstuff, zyzzyva}, which was first
proposed by Dwork et al.~\cite{dwork1988consensus} and later used by
seminal works like Paxos~\cite{lamport2001paxos} and
PBFT~\cite{pbft}.
As explained in detail in Section~\ref{sec:VbV}, protocols designed
for this model never violate safety, but provide progress only
during long enough synchronous periods.
Despite their limitations,
they are widely adopted in the
industry due to their relative simplicity compared to the
alternatives and their performance benefits during synchronous
periods.
For example, Casandra~\cite{cassandra},
Zookeeper~\cite{zookeeper}, and Google's
Spanner~\cite{spanner} implement a variant of
Paxos~\cite{lamport2001paxos}, and VMware's Concord~\cite{concord},
Facebook's Libra~\cite{libra} and IBM's
Hyperledger~\cite{hyperledger}, implement SBFT~\cite{sbft},
HotStuff~\cite{hotstuff} and PBFT~\cite{pbft}, respectively.

\paragraph{Agreement in the asynchrony model.}
As first shown by Ben-Or~\cite{ben1983another} and
Rabin~\cite{lehmann1981advantages}, the FLP impossibility result
does not stand randomization.
Meaning that the randomized version of the Agreement problem, which
guarantees termination with probability $1$, can be solved in the
asynchronous model provided that parties can flip random coins.
The algorithms in~\cite{ben1983another, lehmann1981advantages} are
very inefficient in terms of time and message complexity, and there
has been a huge effort to improve it over the years.
Some considered the theoretical full information model,
in which the adversary is computationally unbounded, and showed
more efficient algorithms that relax the failure
resilience threshold~\cite{kapron2010fast, king2013byzantine}.
These are beautiful theoretical results but too
complex to implement and maintain.

A more practical model for randomized asynchronous agreement is the
random oracle model in which the adversary is computationally bounded
and cryptographic assumptions (like the Decisional
Diffie–Hellman~\cite{diffie1976new}) are valid.
In the context of distributed computing, this model was first proposed
by Cachin et al.~\cite{Cachin2000RandomOI, CachinSecure}. In
\cite{CachinSecure} they proposed an almost optimal algorithm for the
agreement problem.
A variant of this algorithm was later implemented in
Honeybadger~\cite{miller2016honey} and Beat~\cite{duan2018beat},
which are the first academic asynchronous SMR systems.
The protocol in~\cite{CachinSecure} is optimal in terms of resilience
to failures and round complexity, but has an inefficient $O(n^3)$
communication cost.
Improving the communication cost was an open problem for almost 20 years,
until it was recently resolved in VABA~\cite{vaba}.
ACE borrows a lot from VABA~\cite{vaba}.
In fact, ACE can be seen as a generalization of the
approach introduced in VABA of letting $n$ parties progress in
parallel and then retrospectively choosing one. 

\paragraph{Frameworks for agreement.}

There are two previously proposed agreement
frameworks~\cite{guerraoui2010next, lamport2009vertical} that we are
aware of.

The next 700BFT~\cite{guerraoui2010next} framework proposes an
approach to compose different byzantine SMRs.
They observed that no byzantine SMR can outperform all others under
all circumstances, and introduce a general way for a system designer
to switch between implementations whenever the setting changes.
They defined \emph{Abstract}, which is an abortable SMR abstraction,
that captures the progress and safety requirements from a partially
synchronous SMR, and provides guidance on how multiple Abstract
instances should be composed.
Our work is very different from theirs.
While they defined an abstraction in order to compose different
SMR view-by-view implementations to achieve better performance in the
partially synchronous model, our LBV abstraction provides an API to
decouple the leader-based phase from the view-change phase in each
view, which in turn allows us to compose LBV instances in a novel way
that avoids leader demotions via timeouts and boost liveness in
asynchronous networks.

Vertical Paxos~\cite{lamport2009vertical} is a class of consensus
algorithms that separates the mechanism for reaching agreement from
the one that deals with failures.
The idea is to use a fast and small quorum of parties to drive
agreement, and have an auxiliary reconfiguration master to
reconfigure this quorum whenever progress stalls.
The protocol for agreement relies on the participation of all
parties in the dedicated quorum, and thus stalls
whenever some party fails.
The master is emulated by a bigger quorum, which uses an agreement
protocol to agree on reconfiguration, and thus can tolerate
failures.


\paragraph{State machine replication.}
Paxos~\cite{lamport2001paxos} (crash-failure model) and
PBFT~\cite{pbft} (byzantine model) were the first to show how to build
an SMR from a single-shot
agreement problem.
In both cases, similarly to ACE, parties use a single-shot
agreement instance to agree on the value of every slot, but contrary
to our algorithm, they do not satisfy the FIFO property since they do
not make sure all parties learn the decision value before moving to
the next slot.
For some applications, e.g., Blockchains, this can be crucial since
the validity of a value sometimes depends on decision values of
previous slots~\cite{bitcoin}.

Moreover, for practical reasons, systems implementing the Paxos and
PBFT algorithms use
periodic checkpoints~\cite{concord, libra, bessani2014state}, in
which parties
exchange all the decision values made since the last checkpoint in
order to free resources associated with these slots.
The cost of these checkpoints is quadratic in the number of
decision values and since better than quadratic communication per
decision is impossible~\cite{dolev1983authenticated} we decide to
avoid these checkpoints.
Instead, we formally define the FIFO and Strong halting properties
and perform a quadratic forwarding mechanism after each slot.
It is important to note that this is a design choice; ACE's
abstractions can be used in a similar way to build SMR with different
guarantees.


%

\paragraph{Fairness.}

Although the Agreement and SMR problems have been studied for
many years, the question of fairness therein was only recently asked,
and we are aware of only few solutions that provide some notion of
it~\cite{amir2011prime, miller2016honey, lev2019fairledger, vaba}.
Prime~\cite{amir2011prime} extends PBFT~\cite{pbft} to
guarantee that values are committed in a bounded number of
slots after they first proposed, and
FairLedger~\cite{lev2019fairledger} uses batching to ensures that
all correct party commits a value in every batch.
However, in contrast to ACE, both protocols are able to guarantee
fairness only during synchronous periods.
Honeybadger~\cite{miller2016honey} is an asynchronous
protocol that, similarly to FairLedger, batches values proposed by
different parties and commits them together atomically.
It probabilistically bounds the number of epochs
(and accordingly the number of slots) until a value is committed,
after being submitted to $n-f$ parties.
The VABA~\cite{vaba} protocol does not use batching, and provides a
per slot guarantee that bounds the probability to choose a
value proposed by a correct party during asynchronous periods.
ACE provides similar fairness guarantees during asynchrony, but
also guarantees equal chance for each correct party during synchrony.

\section{Discussion}
\label{sec:discussion}

In this paper we introduced ACE: a general model agnostic framework
for boosting asynchronous liveness of any leader-based SMR system
designed for the partially synchronous model.
The main ingredient is the novel \emph{LBV} abstraction that
encapsulates the properties of a single view in leader-based
view-by-view algorithms, while providing an API to control the
scheduler of the two phases, leader-based and view-change, in each
view.
Exploiting this separation, ACE provides a novel algorithm that
composes LBV instances in a way that avoids timers
and provides a randomized asynchronous SMR solution.

ACE is model agnostic, meaning that it does not add any assumptions on
top of what are assumed in the instantiated LBV implementation, thus
provides a generic liveness boosting for both byzantine and
crash-failure SMRs. 
In order to instantiate ACE with a specific SMR
algorithm, all a system designer needs to do is alter the code
of a single view to support LBV's API; this should not be too
complicated as the view logic must already implicitly satisfy the
required API's properties.

In addition to boosting liveness, ACE is designed in a way that
inherently provides fairness due to its randomized election of
leaders in retrospect.
Moreover, ACE provides a clear separation between safety, which
relies on the LBV implementation, and liveness, which is given by the
framework.
As a result, a system designer that chooses to instantiate ACE gets a
modular SMR implementation that is easier to prove correct and
maintain -- if a better agreement protocol is published, all the
designer needs to do in order to integrate it in the system is to
alter the LBV implementation accordingly.

To demonstrate the power of ACE we implemented it, instantiated it
with the state of the art HotStuff~\cite{hotstuff} protocol, and
compared its performance to the base HotStuff implementation.
Our results show that while ACE suffers a 2x performance degradation
in the optimistic, synchronous, failure-free case, it enjoys absolute
superiority during asynchronous periods and network attacks.


\begin{acks}                            
We thank Ittai Abraham for helpful initial discussions and Dahlia
Malkhi for reviewing drafts and suggesting valuable
improvements.
We are also grateful to Guy Golan-Gueta, Guy Goren, Idit Keidar,
Eleftherios Kokoris-Kogias, and David Tennenhouse for their useful
feedback.

\end{acks}

\bibliography{bibliography}

%

\end{document}